\newcommand{\subparagraph}{}
\def\BibTeX{{\rm B\kern-.05em{\sc i\kern-.025em b}\kern-.08emT\kern-.1667em\lower.7ex\hbox{E}\kern-.125emX}}
\pgfplotsset{width=10cm,compat=1.9}
\newcommand{\removelatexerror}{\let\@latex@error\@gobble}
\g@addto@macro{\@algocf@init}{\SetKwInOut{Parameter}{Parameters}} 
\pgfplotsset{compat=newest}
\newtheorem{theorem}{Theorem}
\newtheorem{lemma}[theorem]{Lemma}
\newtheorem{definition}{Definition}
\begin{document}
\interfootnotelinepenalty=10000

\sloppy

\title{
The Hermes BFT for Blockchains}

\author{
\IEEEauthorblockN{Mohammad M. Jalalzai, Chen Feng , Costas Busch, Golden G. Richard III, 
Jianyu Niu }\\
}

\IEEEtitleabstractindextext{%
\begin{abstract}

The performance of partially synchronous BFT-based consensus protocols is highly dependent on the primary node. All participant nodes in the network are blocked until they receive a proposal from the primary node to begin the consensus process. Therefore, an honest but slack node (with limited bandwidth) can adversely affect the performance  when selected as primary. 
{\em Hermes} 
decreases protocol dependency on the primary node  and minimizes  transmission delay induced by the slack primary while keeping low message complexity and latency.

Hermes achieves these performance improvements by relaxing strong BFT agreement (safety) guarantees only for a specific type of Byzantine faults (also called equivocated faults). Interestingly, we show that in Hermes equivocating by a Byzantine primary is {\em unlikely}, {\em expensive} and {\em ineffective}. Therefore, the safety of Hermes is comparable to the general BFT consensus.
We deployed and tested Hermes on $190$ Amazon $EC2$ instances. In these tests, Hermes's performance was comparable to the state-of-the-art BFT protocol for blockchains (when the network size is large) in the absence of slack nodes. Whereas, in the presence of slack nodes 
Hermes outperformed the state-of-the-art BFT protocol by more than $4\times$ in terms of throughput as well as $15\times$ in terms of latency. 
\end{abstract}
\begin{IEEEkeywords}
Blockchains, Byzantine Fault Tolerance, Consensus, Performance,  Scalability, Security, Throughput.
\end{IEEEkeywords}}

\maketitle
\IEEEdisplaynontitleabstractindextext
\IEEEpeerreviewmaketitle
\IEEEraisesectionheading{\section{Introduction}\label{Introduction}}

Scaling of a consensus protocol to support a large number of participants in the network is a desired feature. It not only allows more participants to join the network but also improves decentralization \cite{ImpossibilityofDecentralization}.  Additionally, reliability of a consensus protocol also depends on the actual number of faults it can tolerate \cite{Gartner03, Experimental-Evaluation}. In a  BFT-based (Byzantine Fault Tolerant) protocol  if the total number of participating nodes is $n$ then the number of faults that can be tolerated  is bounded by $f < n/3$ \cite{Fischer:1985:IDC:3149.214121}. Therefore, to improve the  fault tolerance a scalable protocol has to be designed so that it can operate well with larger values of $n$. However, the increase in number of nodes generally results in higher message complexity which adversely affects the BFT performance \cite{Castro:1999:PBF:296806.296824,DBLP:conf/ifip114/Vukolic15,Luu:2016:SSP:2976749.2978389}.
Recent works have tried to address the scalability issues in the BFT protocols
\cite{Jalal-Window,DBLP:journals/corr/LiuLKA16a, Guerraoui:2010:NBP:1755913.1755950,SBFT,194906}.



In BFT-based protocols the primary node acts as a serializer for requests. 
The consensus epoch begins as the primary  proposes request batch/block to nodes in the network. The epoch ends after nodes agree on a block and add it to their chain. 
In practice, usually the size of the block proposed by the primary may range from several Kilobytes to several Megabytes.  Therefore, by increasing the number of participants/nodes in the network the primary node  with limited bandwidth has to spend most of its time broadcasting the block to a large number of nodes. 
Moreover, consensus nodes are blocked until they receive the proposal from the primary. Inversely, a primary node will be blocked until it receives at least $n-f$ responses from other nodes before proposing the next block. By designing a protocol that shortens this blocking time, BFT-based consensus protocol performance can be improved.

The Hermes protocol design has two new elements. 
The first design element includes proposing a block to a subset of nodes of size $c$ and having a round of message exchange among this subset  to make sure no request (block) equivocation has taken place by the primary.
This design element helps us to achieve two goals namely: $(1)$ High performance despite slack primary nodes; and $(2)$ Efficient Optimistic Responsiveness. 
The second design element involves removing a round (phase) of message exchange among all nodes (equivalent to first phase of PBFT) from happy case execution (normal mode) and adding a recovery step to the view change mode. This design element helps Hermes to achieve design goal $(3)$ Latency.
In addition, we make use of a design element in \cite{Jala1907:Proteus} in order to achieve design goal $(4)$ Scalability.
This element involves node  communication through a subset of nodes of size $c$ also called the \emph{impetus committee}.\looseness=-1

Although the use of a committee to improve BFT performance has been done previously \cite{Jala1907:Proteus,Algorand, SublinearRBFT, Communication-Complexity}, we use the impetus committee for completely different design goals ($1$ through $3$). Moreover, a novel combination of the above three design elements to improve the overall BFT performance is something that to our best knowledge has not been done before.

Below we present key design goals  of Hermes BFT that have pushed the performance to a next level especially for a blockchain setup.\looseness=-1


\textbf{1: High performance despite slack primary nodes.} Slack nodes are honest nodes that have lower upload bandwidth compared to other nodes in the network. Lower upload bandwidth increases transmission delay (which is the time taken to put packets on the wire/link). On the contrary, prompt nodes have higher upload bandwidth and have lower transmission delay.
In normal BFT based protocols \cite{Hot-stuff,Lamport:1982:BGP:357172.357176,tendermint} the primary has to broadcast a block to all nodes in the network of size $n$. Whereas in Hermes the primary broadcasts a block only to a small subset of nodes of size $c$ (also called impetus nodes). In Hermes the growth of $c$ is sub-linear to $n$, therefore increase in $n$ will not have significant impact on $c$, and hence on performance of Hermes. This leads to another interesting property that we call Efficient Optimistic Responsiveness.

\textbf{2: Efficient Optimistic Responsiveness.} 
Responsiveness is an important property of BFT state machine replication (SMR) protocols \cite{BoundsontheTime, Thurdrella}.
In BFT SMR protocols with  responsiveness the primary drives the protocol towards consensus in a time that depends on actual message delay, regardless of any upper bound on message delivery \cite{Hot-stuff,BoundsontheTime}. A protocol is optimistically responsive if it achieves responsiveness when  additional constrains are met.
Hotsuff 
\cite{Hot-stuff} is optimistically responsive in which the primary has to wait for $n-f$ responses to send next proposal. 
In Hermes during period of synchrony, a correct primary will only wait for $c/2 +1$ responses (lower than $n-f$ responses) to propose next block that will make progress with high probability. After a round of agreement these $c$ nodes (also called impetus committee) if agreed, will forward the proposed block to all nodes in the network.
In case of Hermes,  conditions for optimistic responsiveness include receipt of $c/2 +1$ responses by the primary.
For different values of $c$ chosen in our experiments in Section \ref{section:experiments} the probability of having at least one prompt node among a subset of size $c$ is approximately $1-10^{-9}$ (when the number of prompt nodes are at least $f+1$). This means that with high probability there will be at least one prompt  node in the impetus committee that can forward the block to the regular nodes efficiently. 
Therefore we call Hermes efficient optimistic responsive as  messages in Hermes propagate to nodes with the wire speed comparable to the wire speed of prompt nodes.\looseness=-1

\textbf{3: Latency.}
BFT-based protocols \cite{Castro:1999:PBF:296806.296824,SBFT,BFT-SMART} generally operate in two phases (excluding $Pre-prepare$  phase). 
In the first phase each node receives $2f+1$ responses before moving to the second phase. 
The first phase has two objectives.
First, it guarantees that the request is unique. We achieve this objective in Hermes with high probability through message exchange among a small subset of nodes of size $c$ in $Pre-Proposal$ phase shown in Figure \ref{fig:normalmode}. Second, it guarantees that if a request is 
committed in the second phase by at least one node just before a view change, all other correct nodes will eventually commit this request. 
This means for Hermes the second objective of the first phase is only necessary when there is a view change. To save broadcast and processing latency (while achieving second objective), we remove this phase from happy case mode and instead add an additional recovery phase to view change sub protocol in Hermes.

\textbf{4: Scalability.} 
Hermes is using the impetus committee to maintain performance in the presence of a slack primary and achieve efficient optimistic responsiveness. Therefore, the same impetus committee can be leveraged to improve message complexity as done in Proteus \cite{Jala1907:Proteus}.
Thus, we extend a single round of broadcast message where each node receives $n-f$ responses before committing a block into 
 $Proposal$, $Confirm$ and $Approval$ phases (as show in Figure \ref{fig:normalmode}). 
 This reduces the message complexity from $O(n^2)$ to $O(cn)$. \looseness=-1




 The trade-off for improvements in Hermes is a relaxed safety guarantee (R-safety) in the presence of equivocated faults in which a Byzantine primary proposes multiple blocks for the same height just before a view change. Informally, R-safety can be defined as a block will never be revoked if it is committed by at least $2f+1$ nodes or $f+1$ correct (honest) nodes. Whereas in Strong safety (S-safety) a block will never be revoked if it is committed by at least one correct node.
 Strong safety will not hold in the presence of equivocated faults. By making equivocation difficult we increase the likelihood of Hermes to be S-safe. Moreover we also show that the equivocation faults will not have any affect on clients. That means if a client receives commit approval for a transaction from network, then that transaction will never be revoked. Therefore, the main purpose of equivocation which is double spending attack is not possible. 
 The only side affect of the equivocation faults is the network recovery cost from failure. 
We also show that due to the design of Hermes, equivocating faults are unlikely (the primary as well as the majority of the nodes in the impetus committee have to be Byzantine) and expensive  for Byzantine primary and its acquaintances in the impetus committee (the primary as well as the Byzantine nodes in the impetus committee that have signed for two different blocks at the same height will lose their stake in the network or can be blacklisted). Therefore,  a Byzantine primary finds equivocation unlikely, expensive and has no incentive to perform it. 

\begin{figure*}
    \centering
    \includegraphics[width=15cm,height=4cm]{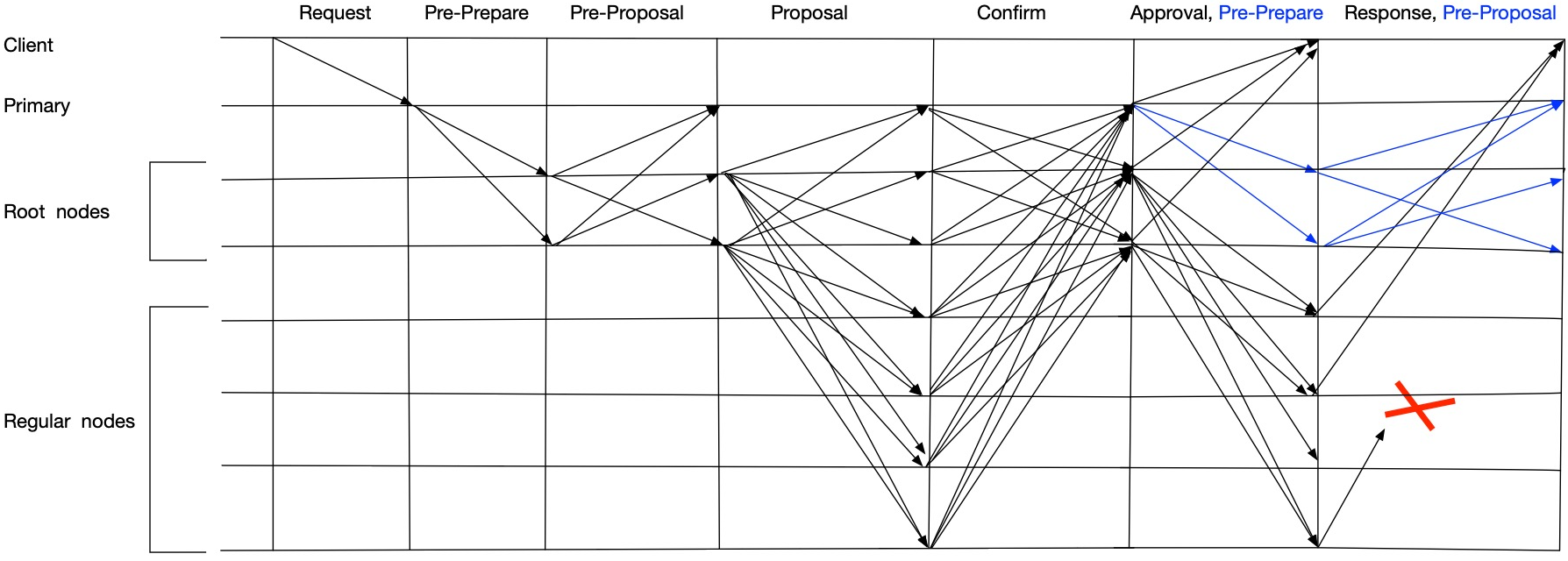}
    \caption{Message pattern in each phase of normal mode in Hermes BFT}
    \label{fig:normalmode}
\end{figure*}

\textbf{Paper Outline.}
The paper is organized as follows.  
In Section \ref{System Model} we give our system model and definitions.
In Section \ref{Protocol} we present the overview of the Hermes protocol. Section \ref{Detailed Protocol Operation} provides a detailed operation of Hermes protocol. 
Proof of correctness for Hermes appears in Section  \ref{Section:Proof of correctness} and in Section \ref{Optimization} we describe  Efficient Optimistic Responsiveness as protocol optimization.
Section \ref{section:experiments} contains the experimental analysis and in Section \ref{Section: Related Work} we present related work .
We conclude our work in Section \ref{Conclusion}.\looseness=-1

\section{Definitions and Model}
\label{System Model}
   Hermes operates under Byzantine fault model.
    Byzantine faults include but not limited to  hardware failures, software bugs, and other malicious behavior. Our protocol can tolerate up to $f$ Byzantine nodes where the total number of nodes in the network is $n$ such that $n=3f+1$. The nodes that follow the protocol are referred to as correct nodes. In this model there can be up to $f$ number of slack nodes ($n_s \leq f$, where $n_s$  is total number of slack nodes ).

    Hermes is a permissioned blockchain. In permissioned blockchain
    nodes in the network are known to each other and have access to each other's public key. In permissioned blockchains nodes can join the network through an access control list.
   In this model nodes are not able to break encryption, signatures and collision resistant hashes. We assume that all messages exchanged among nodes are signed. A message $m'$ signed by the node $i$ is denoted by $\langle m' \rangle_i$, a message $m'$ with aggregated signature from impetus committee quorum (of size $\lfloor c/2 \rfloor + 1$) is denoted by $\langle m' \rangle_{\sigma_r}$, a message $m'$ with aggregated signature from regular node quorum (of size $2f+1$) is denoted by $\langle m' \rangle_{\sigma}$, and  a message $m'$ with aggregated signature from view change quorum (of size $f+1$) is denoted by $\langle m' \rangle_{\sigma_v}$.

  As a state machine replication service Hermes needs to satisfy following properties.


\begin{definition}[Relaxed Safety]
A protocol is R-safe against all Byzantine faults if the following statement holds: in the presence of $\lceil n/3 \rceil -1$ Byzantine nodes, if $ \lfloor 2 n/3 \rfloor$ nodes or  $\lfloor  n/3 \rfloor$ correct nodes commit a block at the sequence (blockchain height) $s$, then no other block will ever be committed at the sequence $s$.
\end{definition}

\begin{definition}[Strong Safety]
A protocol is S-safe if the following statement holds: in the presence of $\lceil n/3 \rceil -1$ Byzantine nodes, if a single correct node commits a block at the sequence (blockchain height) $s$, then no other block will ever be committed at the sequence $s$.
\end{definition}

  \begin{definition}[Liveness]
A protocol is alive if it guarantees progress 
in the presence of 
at most $\lceil n/3 \rceil -1$ Byzantine nodes.
\end{definition}
  Hermes
  guarantees liveness, R-safety as well as probabilistic  S-safety. 
Liveness is achieved in asynchronous network by placing an upper bound on block generation time called timeout.
   Epoch identifies a period of time during which a block is generated. Each epoch can be associated with specific sequence or height of blockchain.
  Once a node receives a block it stops the timer for the expected block and starts its timer 
  for the next block it is expecting from the impetus committee.
  
 For ease of understanding first we describe and prove correctness of the basic Hermes protocol in which the primary proposes the next block once the current proposed block is committed. Then in Section \ref{Optimization} we discuss how Hermes can be optimized so that the primary can propose next block once it collects at least $\lfloor c/2 \rfloor + 1$ responses from the impetus committee (Efficient Optimistic Responsiveness).

\section{Protocol Overview}
\label{Protocol}

\label{protocol}

In Hermes, the primary proposes a block to the impetus committee of size $c$. 
The impetus committee is running an agreement phase  based algorithm
(Algorithm \ref{Algorithm:Customized BFT}).
If $\lfloor c/2 \rfloor + 1$ impetus committee nodes agree on the block proposed by the primary and $2f+1$ regular nodes agree on the proposed block (which they received from the impetus committee), then the block is committed locally by a node and will be added to the blockchain. During normal execution of Hermes we use the same name for a message and its respective phase.\looseness=-1

Normal execution of the protocol is shown in Figure~\ref{fig:normalmode} 
and can be summarized as follows:
\begin{enumerate}
\item 
The primary proposes the block (which contains transactions send by clients) to the impetus committee ($Pre\mbox{-}Prepare$ message).
\item Each impetus committee member verifies the block and makes sure that there is only one block proposed for the expected height
(by collecting signatures from at least $\lfloor c/2 \rfloor +1$ impetus committee members through $Pre\mbox{-}Proposal$ messages).
\item The block is then proposed using the $Proposal$ message which includes  primary signature along with $\lfloor c/2 \rfloor +1$ aggregated signatures of impetus committee members from $Pre\mbox{-}Proposal$ messages. 
\item Upon receipt of a block, regular nodes verify the aggregated signature and the transactions within the block. 
\item If the block is found to be valid, each regular node responds with the signed $Confirm$ message.

\item Upon receipt of $2f+1$ $Confirm$ messages from regular nodes, each impetus committee member as well as the primary
commits the block. Each member of the impetus committee broadcasts  approval of the majority nodes in the form of $2f+1$ aggregated signatures from  regular nodes in $Approval$ message.
\item Upon receipt of approval, each regular node commits the block, which is then added to the local history. \looseness=-1
\item After committing locally, each node sends a $Reply$ message to the client. The client considers its transaction to be committed upon receipt of $2f+1$ $Reply$ messages (or $f+1$ similar $Reply$ messages).
\item Soon after committing a block the primary begins the next epoch (messages are shown in blue) and proposes a new block to the impetus committee using $Pre\mbox{-}Prepare$ message. The impetus committee members begin the $Pre\mbox{-}Proposal$ phase and the protocol progresses through each of its phases as described above.
\end{enumerate}

\subsection{Accountability} \label{subsection:Accountability}

To discourage malicious primary as well as impetus committee nodes from equivocation, Hermes nodes will have to provide a fixed amount of stake while joining the network as done in Proof-of-Stake (PoS) protocols \cite{PoS, tendermint}. But this amount will be equal for every participant in the network. Since every participant has equal stake therefore the probability of being selected as primary node or member of the impetus committee is same for every node. Alternatively, any node involved in equivocation can be blacklisted in the network.\looseness=-1

\subsection{Selecting Impetus Committee Members}
\label{section:impetus selection}

Consider a set of $n$ nodes
and let $\mathcal{N}=\left\{ i |~1\leq i\leq n \right\}$ be their unique node ids, which for simplicity are assumed to be taken between $1$ and $n$. Since all nodes are regular, $\mathcal{N}$ also denotes the set of regular nodes.
Suppose that out of the $n$ nodes at most $f$ 
are faulty such that $f < n/3$;
actually, we will assume the worst case $n = 3f + 1$.

Let $\mathcal{C} \subset \mathcal{N}$ denote the set of nodes in the impetus committee, such that $|\mathcal{C}| = c$,
where $c \ll n$ is a predetermined number 
that specifies the size of the impetus committee (from now on $\mathcal{C}$ and impetus committee will be used interchangeably in the paper).
The impetus committee $\mathcal{C}$ is formed by randomly and uniformly picking 
a set of $c$ nodes out of $n$.
In addition to the impetus committee members, 
there is a primary node picked randomly and uniformly
out of the remaining $n-c$ nodes.


Since $f < n/3$, therefore on expectation, the impetus committee will have less than $c/3$ faulty nodes.
Hence, $\mathcal{C}$ will likely have less than $c/2$ faulty nodes as well.
Nevertheless, as the members of $\mathcal{C}$ are randomly picked, having $c/2$ or more faulty nodes in $\mathcal{C}$ are at long odds.




Moreover, the primary might be faulty as well.
However, a view change can address primary as well as impetus committee failure.
We carry on with analysis to precisely determine the likelihood of different scenarios for picking the members in $\mathcal{C}$.
In the formation of $\mathcal{C}$
the number of possible ways to pick any specific set 
of $c$ nodes out of $n$ is $\binom{n}{c}$.
The probability to pick exactly $a$ correct nodes and $b$ faulty nodes in $\mathcal{C}$, 
such that $a + b = c$, is $
\frac{{\binom{n-f}{a}}{\binom{f}{b}}}{\binom{n}{c}}.$
Therefore, the probability $P_f$ of having at least $c/2$ faulty nodes ($b \geq c/2$) in $\mathcal{C}$ will be:
\begin{equation}
P_f= \sum_{b=\lceil c/2 \rceil}^{c}\frac{{\binom{n-f}{a}}{\binom{f}{b}}}{\binom{n}{c}}.
\label{Eq:Failure Probability Equation}
\end{equation}

If $\mathcal{C}$ is unable to generate a block by the end of timeout period, then $\mathcal{C}$ is replaced by another randomly chosen committee and a new primary through view change.

\textbf{View Change Probability.}
\label{section:view_change_probability}

View change can be triggered either by the failure of impetus committee $\mathcal{C}$ or failure of a primary. More specifically the two  cases that can ultimately result in a view change are: (i) 
$b \geq c/2$, where $b$ is the number of faulty nodes in  $\mathcal{C}$ (this comes with probability $P_f$), or 
(ii)
when $b < c/2$ and the primary node is faulty.
For the latter case ii, since we choose the primary randomly from $n-c$ nodes if the total number of faulty  nodes is $f < n/3$, then the probability of primary being faulty 
is at most $(f - b)/(n-c) < n/(3(n-c))$;
hence, the probability that case ii occurs is less than  $n(1 - P_f)/(3(n-c))$.
Therefore, the probability $P_v$ of having view change due to case i or ii 
is bounded by
    $P_v <  n (1 - P_f)/(3(n-c)) + P_f.$
Since $P_f$ is approaching $0$ and $n \gg c$, 
the upper bound of probability $P_v$ can be approximated by $1/3$ 
asymptotically to the limit of $n$.\looseness=-1

\textbf{Probability of At Least One Correct Node in $\mathcal{C}$.}
For a view change to be initiated, it requires at least one correct node in $\mathcal{C}$.
In the worse case, all the nodes in $\mathcal{C}$ are faulty,
namely, $b = c$; this is the total failure scenario that does not allow view changes.

We observe that the probability of not having any correct node in $\mathcal{C}$ for different values of $n$ and $c$ in our experiments
is at most $3.8 \times 10^{-22}$.
Consequently, the probability of avoiding total failure is extremely high.\looseness=-1

\SetKwFor{Upon}{upon}{do}{end}
\SetKwFor{Check}{check always for}
{then}{end}


\setlength{\textfloatsep}{1.5pt}

\begin{algorithm}[t]
\DontPrintSemicolon 
\setstretch{1}   
\caption{Algorithm for primary node}
\label{Algorithm:Customized BFT}

\If{$i$ is primary} {
\If{There is $\beta$ from previous view}{
\If{$i$ has the payload $m$ for $\beta$}{
Generate block $B$ from $\beta$ with payload $m$\;
}\Else{
Request $Proposal$ for $\beta$ from $\mathcal{C}$\;
\Upon{Receipt of $Proposal$}{
Generate block $B$ from payload $m$ in $Proposal$\;
}
\Upon{Receipt of $2f+1$ negative response}{
Generate block $B$ and also attach $2f+1$ negative responses for $\beta$ \;

}
}
}\Else{
Generate block $B$\;
}
Broadcast $B$ to the set $\mathcal{C}$ \tcp{Pre-Prepare}
}

\Upon{receipt of $2f+1$ valid $Confirm$ messages}{
\tcp{Commit block and increment height}
$\psi_i(B,s) \gets true$ \;
$s=s+1$\;

Send $Response$ to clients\;
}

\end{algorithm}


\begin{algorithm}[h]
\DontPrintSemicolon 
   
\caption{ Node $i \in \mathcal{C}$  }\label{Algorithm:C member}
\Upon{receipt of valid $B$ }{ \label{marker}
Broadcast $Pre\mbox{-}Proposal$ message to $\mathcal{C}$\;
}

\Upon{receipt of $\lfloor c/2 \rfloor +1$ valid $Pre\mbox{-}Proposal$ messages for $B$}{
Build the $Proposal$\;
Broadcast the $Proposal$ message to regular nodes (except the primary) \;
Send $\beta$ to the primary
}
\Check{receipt of a valid $\Gamma_j$ from regular node $j$ or $\Gamma_p$ from primary}{
Execute Algorithm \ref{Algorithm:Synchronization sub Protocol for node e}

}
\Check{for $\Gamma_p$ response}{
Forward the response to the primary
}
\If{not received $B$ by block timeout}{
Broadcast $\Gamma_i$ to regular nodes\;
Accept messages from regular nodes to synchronize local history\;
}

\Upon{receipt of $2f+1$ valid $Confirm$ messages}{
$\psi_i(B,s) \gets true$ \;
Broadcast $Approval$ message\;
Send $Response$ to clients\;
}
\Check{Receipt of first $ConfV$ before receiving $Proof$}{
Broadcast $ConfV$ \;
}


\Check{detecting proof of maliciousness: $E$ complaint or  $f+1$ $\Gamma$ complaints}{
Broadcast $proof$\;
}

\Check{Receipt of 2f+1 ConfV for the view change}{
Do not send Pre-Proposal/Confirm message anymore for this view\;
Build $ApproveV$ from $2f+1$ ConfV messages\;
Broadcast the $ApproveV$ to regular nodes\;
}

\end{algorithm}

\section {The Protocol}
\label{Detailed Protocol Operation}

 The basic operation of our protocol, Hermes, is presented in Algorithms \ref{Algorithm:Customized BFT}, \ref{Algorithm:C member}, and \ref{Algorithm:Regular members}, which describe the normal execution between the impetus committee $\mathcal{C}$ and regular nodes. Algorithm \ref{Algorithm:Synchronization sub Protocol for node e} is executed by synchronization sub-protocol. If normal execution fails, then our protocol switches to view change mode executing Algorithms \ref{Algorithm:Regular ViewChange}, \ref{Algorithm:impetus committee ViewChange} and \ref{Algorithm: New Primary ViewChange} to recover from failure.
Note that the members of $\mathcal{C}$ and the primary also run themselves the protocols for regular nodes in normal mode.\looseness=-1

\subsection{Happy Case Execution} \label{subsection: Normal Mode}
The currently designated primary node $p$ proposes a block by broadcasting a $Pre\mbox{-}Prepare$ message to $\mathcal{C}$ (Algorithm \ref{Algorithm:Customized BFT}, lines 1-9). A $Pre\mbox{-}Prepare$ message from primary $p$ sends a newly created block $B=(\langle ``Pre\mbox{-}Prepare",v,s,h,d,o \rangle_p,m)$ which contains the view number $v$, block sequence number $s$, transaction list $m$, its hash $h$, the previous blockhash $d$ and optional field $o$ which can be used by the primary to send the proof $2f+1$ of negative responses ($F$) during first epoch of its view (new primary in its first epoch might request the latest $Proposal$ from the previous view. If replicas do not have the $Proposal$ they will send negative response $F$). More details about this can be found in the subsection \ref{Subsection: View change}. 
Let $\rho = \langle ``Pre\mbox{-}Prepare",v,s,h,d \rangle_p$.

A node $i$ in $\mathcal{C}$ begins $Pre\mbox{-}Proposal$ phase of the algorithm after receipt of a $Pre\mbox{-}Prepare$ message. Then, node $i$ broadcasts a $Pre\mbox{-}Proposal$ message $\langle ``Pre\mbox{-}Proposal",v,s,h,i \rangle_i$
if it finds the $Pre\mbox{-}Prepare$ message to be valid. The validity check of the $Pre\mbox{-}Prepare$ message includes checking the validity of $s$, $v$, $d$, $h$ and  transactions inside $m$ (Algorithm \ref{Algorithm:C member}, lines 1-3). 
If node $i$ receives $\lfloor c/2 \rfloor + 1$ $Pre\mbox{-}Proposal$ messages from other members of $\mathcal{C}$ for  block $B$ then the node $i$ will successfully create a proposal block $(\langle ``Proposal",v,s,h,d \rangle_{\sigma_r},B)$. This proposal block can be compressed into 
$(\langle ``Proposal",\rho \rangle_{\sigma_r},m)$ and
then node $i$ will broadcast it to the regular committee members; $\sigma_r$ aggregates the signatures of the $\lfloor c/2 \rfloor + 1$ members of $\mathcal{C}$
that contributed to the $Proposal$. Let $\beta = \langle ``Proposal",\rho \rangle_{\sigma_r}$ since the primary already has the payload $m$ (from block $B$ it already proposed), the impetus committee member $i$ will only forward $\beta$ to the primary instead of sending the whole $Proposal$ (Algorithm \ref{Algorithm:C member} lines $4-8$).\looseness=-1 

\begin{figure*}[ht]
\begin{minipage}{0.30\linewidth}
\includegraphics[width=\textwidth, height=3cm]{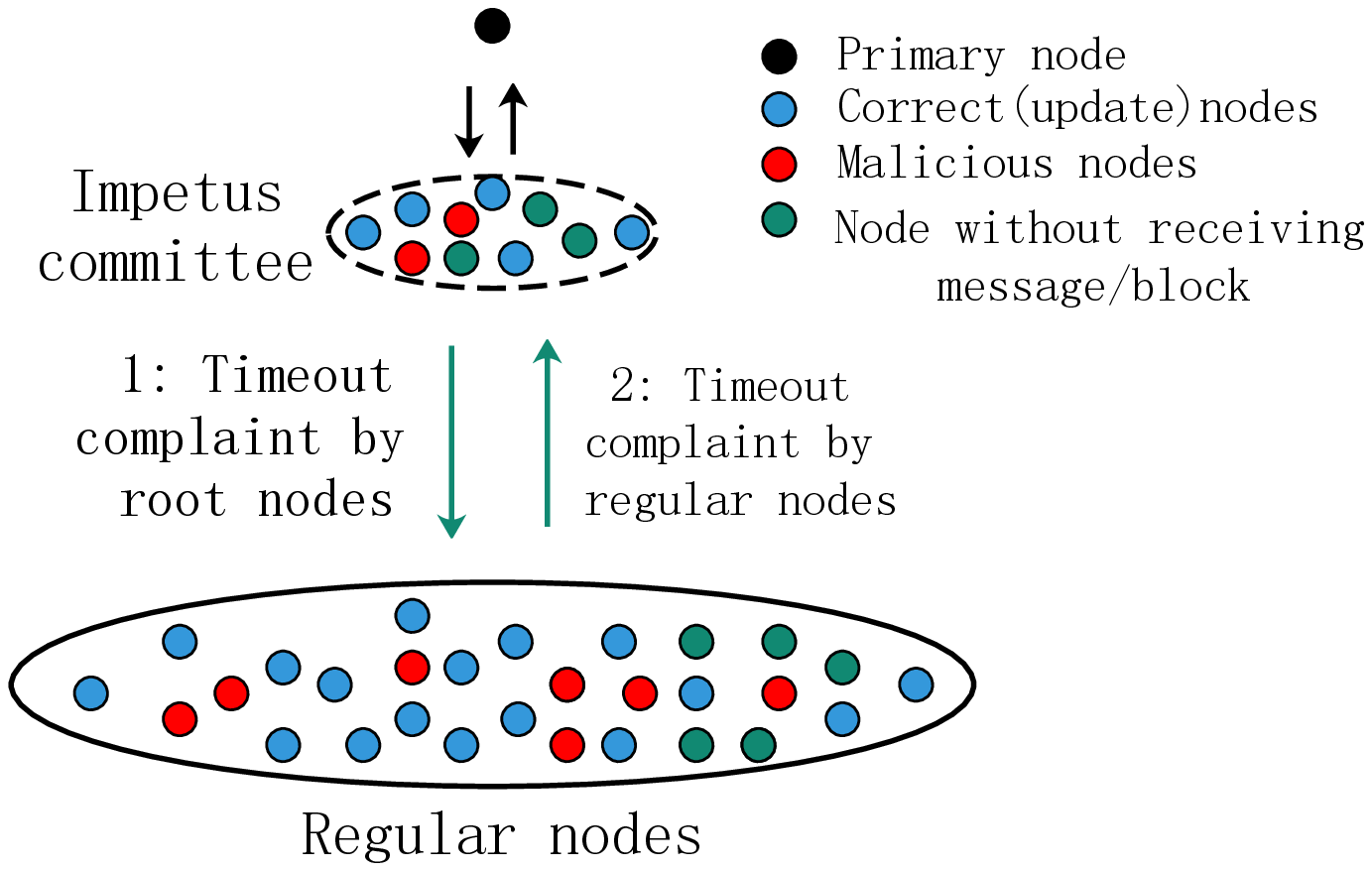}
\caption{Time out complaint }
\label{fig:Synch-Complaint}
\end{minipage}%
\hfill
\begin{minipage}{0.30\linewidth}
\includegraphics[width=\textwidth, height=3cm]{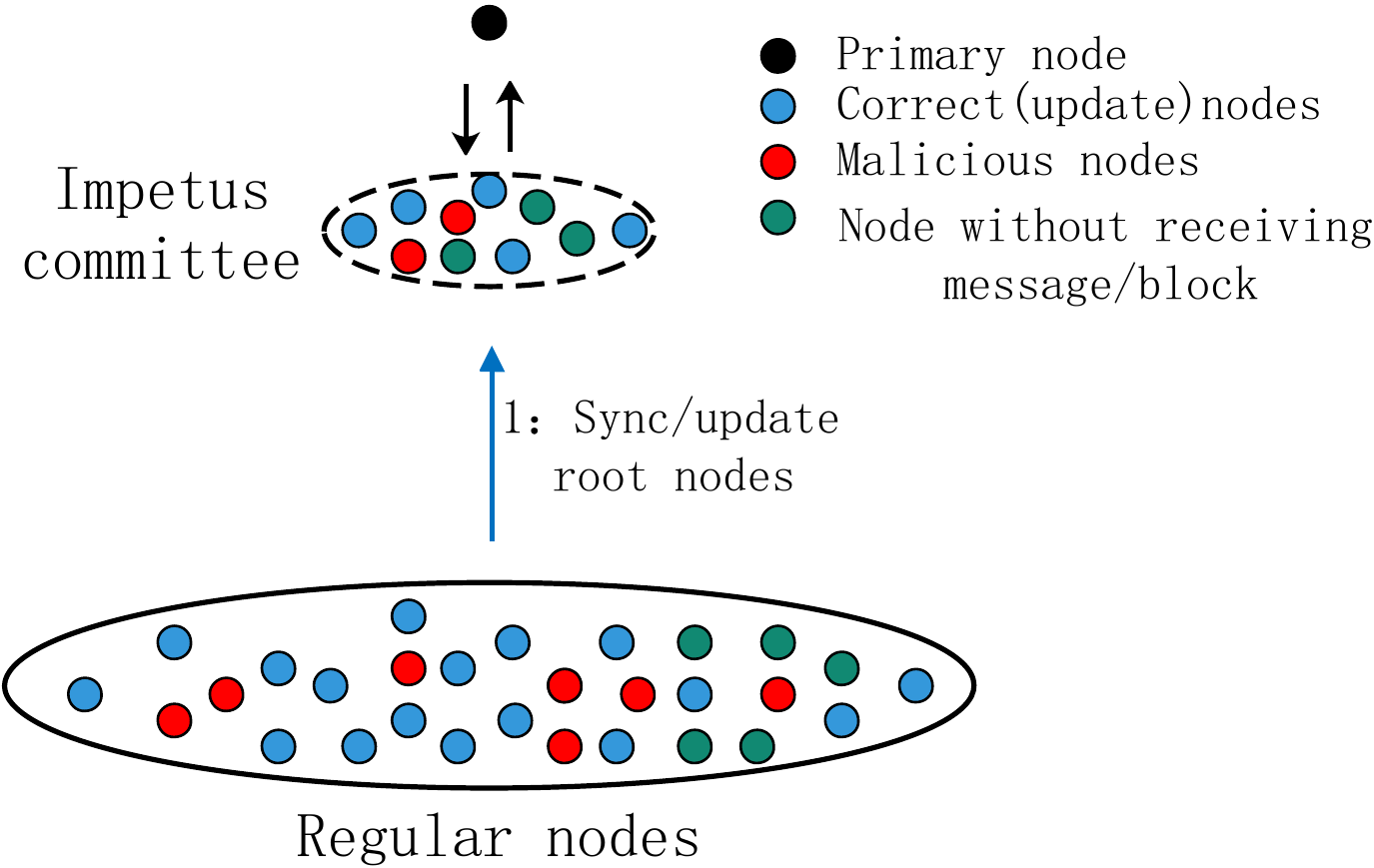}
\vspace{-0.5cm}\caption{Updating $\mathcal{C}$}
\label{fig:impetus-Sync}
\end{minipage}%
\hfill
\begin{minipage}{0.30\linewidth}
\includegraphics[width=\textwidth, height=3cm]{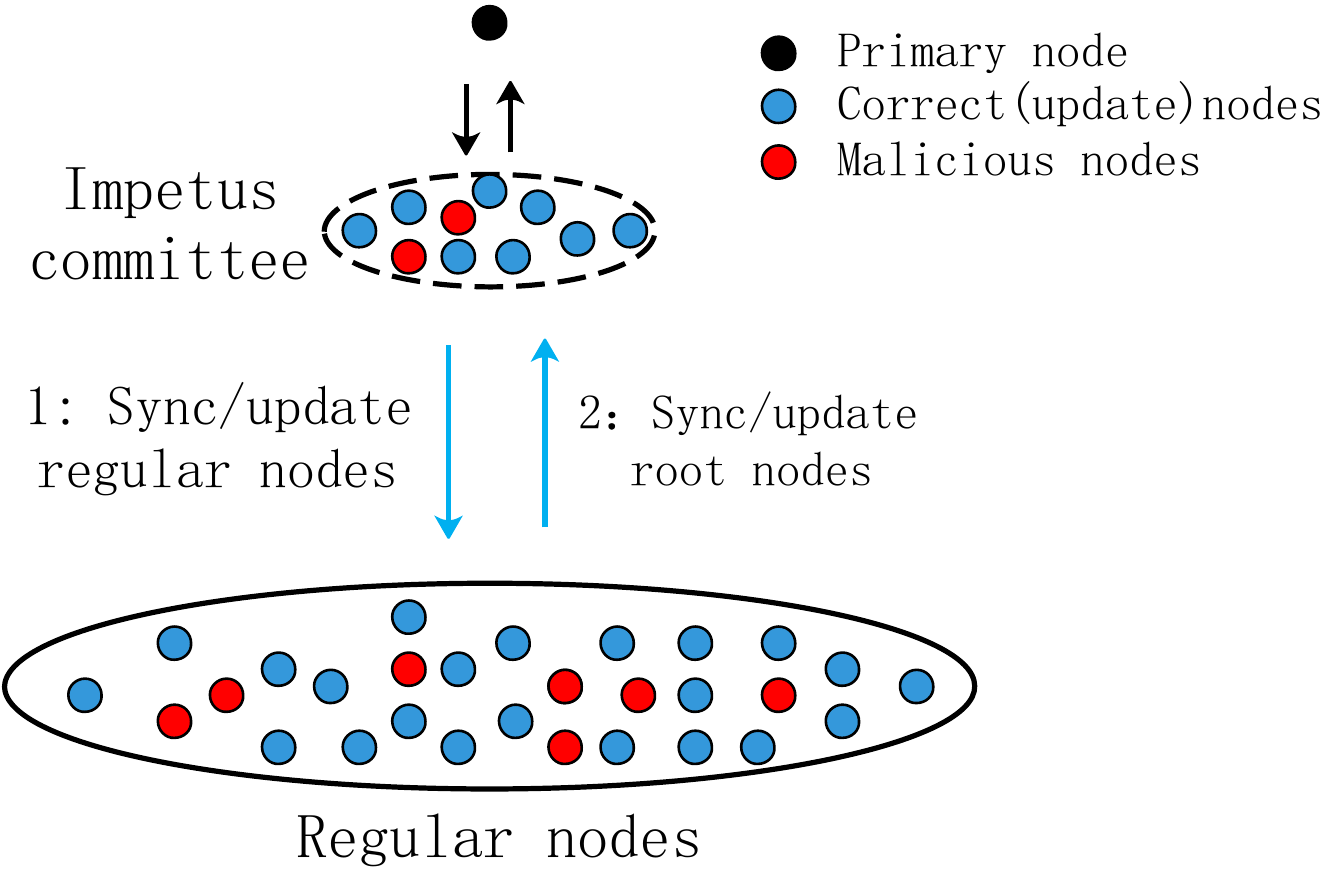}
\caption{Updating regular nodes}
\label{fig:Regular-node-Sync}
\end{minipage}
\end{figure*}

Upon receipt of a $Proposal$ message from $\mathcal{C}$, the regular nodes check if it is signed by at least $\lfloor c/2 \rfloor + 1$ members of $\mathcal{C}$. Regular nodes also verify the block by performing format checks and verification of each transaction against their history. If verification is successful, a regular node $j$ sends back a signed $Confirm$ message $\langle``Confirm",v,s,h,j \rangle_j$ 
to $\mathcal{C}$
(
\ref{Algorithm:Regular members}, lines 1-4). The confirmation of a block $B$ at height $s$ by a node $j$ is denoted by $\eta_j(B,s) \gets true$.

Each member $i$ of $\mathcal{C}$ aggregates $2f+1$ signatures $\sigma$ for $Confirm$ messages 
and then commits the block locally ($\psi_i(B,s)$ is used to show  commit of a block $B$ by a node $i$ at the height $s$). The node $i$ then broadcasts $Approval$ message  $\langle ``Approval", v,s,h\rangle_\sigma$ 
to regular nodes. 
Each member $i$ of $\mathcal{C}$ will also send the $Response$ message  ($\langle  Response, s, v, r, t, i \rangle_{i}$) to each client confirming the execution of the respective transaction (with the timestamp $t$ that the client submitted the transaction and its result $r$)
that it contributed to the block $B$ (Algorithms \ref{Algorithm:Customized BFT}, lines 21-25 and Algorithm \ref{Algorithm:C member} lines $19-23$).
Upon receipt of a valid $Approval$ message from $\mathcal{C}$ (for the current block in the sequence), the regular member $k$ also commits the block ($\psi_k(B,s) \gets true $) as shown in Algorithm \ref{Algorithm:Regular members}, lines 12-21.
Each node $k$ (after committing the block) also sends a $Response$ message as shown in Algorithm \ref{Algorithm:Regular members} line 22.

\begin{algorithm}[t]
\DontPrintSemicolon 
   
\caption{ Synchronization sub Protocol for node e }\label{Algorithm:Synchronization sub Protocol for node e}

\SetKwInOut{Input}{input}
\SetKwInOut{Output}{output}

\Input{$\Gamma$}
\If{$e \in \mathcal{C}$}{
\tcp{If primary requests $Proposal$ from previous view}
\If{$\Gamma_p$}{
\If{have $Proposal$ for $\Gamma_p$ }{
Send $Proposal$ to the Primary
}\Else{
\tcp{Send negative response}
Send $F$ to primary\;
Forward $\Gamma_p$ to regular nodes\;
}
}\If{$\Gamma_j \And j \not \in \mathcal{C}$}{
Update $j$ up to latest locally known block\;
}
}\Else{

\If{$\Gamma_i \And i \in \mathcal{C} $}{
Send missing blocks to committee member $i$
}\If{$\Gamma_p$}{
\If{have $Proposal$}{
 Send $Proposal$ to $\mathcal{C}$
}\Else{
Send $F$ to primary

}
}
}

\end{algorithm}

\begin{algorithm}[t]
\DontPrintSemicolon

\caption{Regular node k}\label{Algorithm:Regular members}

\Upon{receipt of valid  
$Proposal$ (valid  
$\beta$ if k is also primary) from $\mathcal{C}$}{
\tcp{Confirm proposal}
$\eta_j(B,s) \gets true$\;
Generate $Confirm$ message and broadcast it to $\mathcal{C}$\;

}


\If{timeout for a block 
or receipt of invalid block}{
Broadcast $\Gamma \parallel E$  complaint to $\mathcal{C}$\\

\If{sequence number of block is out of order}{
Store the block locally and wait to fill the history gap
}
}
\Upon{receipt $Approval$ message}{
\If {$k \in \mathcal{C}$ or $k$ is primary}{
ignore
}
\Else{
Wait if any blocks were missing from sequence\;
\ForEach{ordered block}{ \tcp{Commit the block at height s}
$\psi_k(B,s) \gets true$ \;
$s=s+1$\;

}
Send $Response$ to clients
}
}

\Check{receipt of a valid $\Gamma_i \parallel \Gamma_p$ from  $i \in \mathcal{C}$}{
Execute Algorithm \ref{Algorithm:Synchronization sub Protocol for node e}
}

\tcp{Initiate view change actions}
\Check{Receipt of $Proof \parallel ConfV$ 
}{ 
Broadcast $ConfV$ to $\mathcal{C}$\;
}
\Check{Receipt of a valid $ApproveV$ for view change}{
\tcp{Transition to new view, based on common random number generation seed}
Randomly select members of $\mathcal{C}$  from $N$\;
\eIf{ $k$ is not $primary$}{
Execute Algorithm \ref{Algorithm:Regular ViewChange} $\parallel$ Algorithm \ref{Algorithm:impetus committee ViewChange}\;
}{
Execute Algorithm \ref{Algorithm: New Primary ViewChange}

}

}

\end{algorithm}

\subsection{Synchronization Sub-Protocol}
\label{subsection:synchronization}
The impetus committee $\mathcal{C}$ might be faulty, when it has $c/2$ or more faulty nodes (shown in red in Figures \ref{fig:Synch-Complaint}, \ref{fig:impetus-Sync}, \ref{fig:Regular-node-Sync}).
In such a case, the faulty members of $\mathcal{C}$ might attempt to not update $\zeta \leq f$ regular nodes without triggering view change; namely, not sending $Proposal$ and other messages to the $\zeta$ nodes.
(In the upper bound of $\zeta$, $f$ may include at most $\lceil c/2 \rceil-1$ failed members of $\mathcal{C}$.)
These $\zeta$ nodes (shown in yellow in Figure \ref{fig:Synch-Complaint}) may not be participating in the consensus process as they have not received messages from members of $\mathcal{C}$ (as majority of $\mathcal{C}$ have failed).
Therefore, they will need to sync their history (download messages) with other nodes. \looseness=-1

Suppose that node $j$ is a regular node that needs to be synchronized ($j \in \zeta $) (download missing blocks). 

Let $i$ be a member of $\mathcal{C}$ such that $i$ has received a timeout complaint $\Gamma_j= \langle ``Timeout",v, s', h',s'',h'' ,\tau,j, \rangle_j $ mentioning that $j$ has not received blocks between the sequences $s'$ and $s''$ with respective hashes $h'$ and $h''$. 
The fields $s''$ and $h''$ are optional and if node $j$ has not received any block after the block at sequence $s'$ then  $\Gamma_j= \langle ``Timeout",v, s', h',\tau,j,\rangle_j$. The message type $\tau$ shows which type of message is missing, e.g. a block, an aggregated view change message $\mathcal{Q}$   (will be discussed along with other message types in subsection \ref{Subsection: View change}). 
The view number $v$ identifies the primary. If node $i$  has not sent a block $B^\star$ to the same node $j$ earlier such that its sequence $s^\star \geq s'$or $s^\star \geq s''$ (validity condition for $\Gamma_j$), then the node $i$ will forward missing messages for missing blocks 
as shown in Figure \ref{fig:Regular-node-Sync}. 

If node $i$ times-out without receiving a valid expected message during the consensus process 
(as shown in the Figure \ref{fig:Synch-Complaint}) it will broadcast a complaint $\Gamma_i$ to regular nodes.
Consequently, a node $i$ in $\mathcal{C}$ can recover a block by receiving it from regular nodes 
(as shown in the Figure \ref{fig:impetus-Sync}). 
Thus, members of $\mathcal{C}$ and regular nodes synchronize their history (download missing blocks from each other) while keeping message complexity low (avoid quadratic message complexity).\looseness=-1 

 When a new primary is elected it has to make sure that if a block is committed by  at least a single correct node then it should be committed by at least $2f+1$ nodes before proposing new blocks.
$\Gamma_p= \langle ``Timeout",v', \beta \rangle_p$ is a specific type of timeout complaint by the new primary to request the latest uncommitted  $Proposal$ (actually the payload $m$ for $\beta$) from previous view. $v'$ denotes the view of the current primary. A negative response from  $2f+1$ nodes for $\Gamma_p$ request will prove that this $Proposal$ has not been committed by any node. If the primary receives a response for $\Gamma_p$ it will re-propose the block. Thus, this additional recovery step after view change makes sure that safety is held after the view change.
More details about the $\Gamma_p$ will be discussed in the subsection \ref{Subsection: View change}.

Algorithm \ref{Algorithm:Synchronization sub Protocol for node e} is about synchronization sub-protocol. Lines 1-14 are executed if the node is a member of the impetus committee. In lines 1-10 node $e$ responds to the $\Gamma_p$ request from primary node. After sending the negative response to the primary, node $e$ forwards $\Gamma_p$ to regular nodes (Algorithm  \ref{Algorithm:Synchronization sub Protocol for node e} line 8). In lines 11-13, the impetus committee member $e$ responds to the request of a regular node $j$ by sending the missing blocks to the node $j$. Lines 15-27 are executed if node $e$ is not a member of the impetus committee.  In lines 16-18, node $e$ responds to an impetus member request/complaint $\Gamma_i$. In lines 19-26, node $e$ responds to request $\Gamma_p$ from the primary node.

The primary node might be malicious 
which can cause the impetus committee $\mathcal{C}$ to fail by not proposing a block.
Another possible cause of failure 
can be the presence of majority of malicious nodes in $\mathcal{C}$. As a result, impetus committee nodes cannot collect at least $c/2+1$ signatures for proposed block. In a rare case, it is also possible that both the primary and the impetus committee $\mathcal{C}$ fail; in such a case the primary can collude with the $\mathcal{C}$ to perform block equivocation. The failure can be detected if $f+1$ nodes send timeout complaint or a single node sends a complaint for equivocation by the primary and $2f+1$ nodes receive it.
Upon detecting failure, Hermes will switch to view change mode to select new primary and impetus committee.

\subsection{View Change}\label{Subsection: View change}
The view change sub protocol in Hermes is different from conventional BFT-based protocols as it has additional recovery phase to make sure if a block is committed by a single node just before view change, then it will be committed eventually by all other correct nodes.
A view change can be triggered if there is sufficient proof of maliciousness of the primary or the impetus committee. The two types of complaint by a node $i$ that can form a proof against the Byzantine primary are the $\Gamma$ complaints (as discussed in subsection \ref{subsection:synchronization}) as well as $E=\langle``Explicit\mbox{-}complaint",v,\epsilon, i \rangle_i$ complaint, 
where $\epsilon$ determines the reason for the complaint which can be either an invalid block $Proposal$ or multiple $Proposal$s by the primary and/or $\mathcal{C}$ with the same sequence number. A $Proof$ is simply formed by $f+1$ valid $\Gamma$ complaints or a single valid $E$ complaint. A single valid $Proof$ is required to trigger a view change.

During each epoch, a regular node waits to receive a proposed block from $\mathcal{C}$. If a regular node $i$ does not receive the block after a timeout then it considers that $\mathcal{C}$ has failed and reports this to $\mathcal{C}$
(Algorithm \ref{Algorithm:Regular members}, lines 5-6).
If $f+1$ nodes report a timeout, then there is at least one correct $\mathcal{C}$ member $j$ that will broadcast the aggregated $f+1$ timeout complaints $(\Gamma)$ to all regular nodes (as $Proof$ in Algorithm \ref{Algorithm:C member}, lines 27-29). Upon receipt of $f+1$ $\Gamma$ complaints the regular nodes send back confirm view change $(\langle``ConfV",\langle v,j\rangle_j,\langle Proof \rangle_{\sigma_v}\rangle)$ message to $\mathcal{C}$, where $\sigma_v$ is the aggregated signature from at least $f+1$ complaints 
from $\Gamma$ messages (Algorithm \ref{Algorithm:Regular members}, line 23-25). 
If complaint is of type $E$ then the confirm view will be of the form $(\langle``ConfV",\langle v,j\rangle_j,\langle Proof \rangle_i\rangle)$, where the node $i$ will be the node that has complained.
If an impetus committee member receives a $ConfV$ message and was not aware of the $Proof$ (therefore did not broadcast it to the regular nodes), it will also broadcast the $ConfV$ to the regular nodes (Algorithm \ref{Algorithm:C member} line 24-25). This step is added to prevent a Byzantine member in $\mathcal{C}$  from  triggering the view change in a subset of regular nodes by sending the $Proof$ messages to few regular nodes.

Once a  member $i$ in $\mathcal{C}$ receives $2f+1$ $ConfV$ messages,
then $i$ triggers view change by broadcasting message 
$(\langle``ApproveV",\langle v\rangle_{\sigma},\langle Proof \rangle_{\sigma_v}, \langle i \rangle_i, \rangle)$ 
to all regular nodes (Algorithm \ref{Algorithm:C member}, lines 30-34). Where $\langle v\rangle_{\sigma}$ is the aggregate signature for $2f+1$ $\langle v,j\rangle_j$ from $ConfV$ messages sent by a regular node $j$.
Upon receipt of the $ApproveV$ message each regular node will begin the view change process (Algorithm \ref{Algorithm:Regular members}, lines 26-33).
In the view change, members of new impetus committee $\mathcal{C}$ along with a new primary is selected by each node using a pre-specified common seed for 
pseudo-random number generation \cite{Luby:1994:PCA:562066}, which guarantees that every node selects the identical $\mathcal{C}$ and primary.
In pseudo-random generator algorithms, a random result can be reproduced using the same seed. For example the count for the epoch trial can be considered as input seed. Count for epoch trial is the total number of successful epochs that generated blocks plus total number of failed epochs that resulted in view change. In Hermes the main requirements for random generator include liveness, bias-resistance and public verifiability or determinism. By using the count for epoch trials as a seed for the the Pseudo-random generator we can satisfy the above mentioned requirements. There are other options like Verifiable Random Functions \cite{VerifiableRF,VRF2} which can also be used in order to provide stronger guarantees like unpredictability.  

After randomly choosing the new primary as well as the new impetus committee $\mathcal{C}$, each node sends a $ViewChange$ message $V_k=(\langle``ViewChange",v+1,s',h,k \rangle_k,\sigma, \beta)$ to the new primary (Algorithm \ref{Algorithm:Regular ViewChange} line 1). 
The $ViewChange$ message has information about the latest block in the local history of the node $k$ ($Approve$ message can be built from $V_k$). It includes the latest committed block sequence number $s'$, block hash $h$, incremented view number $v+1$, and signature evidence of at least $2f+1$ ($\sigma$) nodes that have confirmed the block (through $Confirm$ message). The $\beta$ part of this message includes the latest $Proposal$ and its respective $Pre\mbox{-}Prepare$ message from the last primary (without its payload $m$) which was received by the node $k$ from $\mathcal{C}$ (through block $Proposal$ message). The inclusion of $\beta$ in the view change message is used to \textbf{\textit{recover}} the block (transactions) if less than $f+1$ correct nodes have committed the block. We will discuss more about this at the end of the current section. The information in $V_k$  is used to determine whether nodes have different local histories, 
which in turn can allow them to synchronize their histories by getting the possible missing blocks
from the new members of $\mathcal{C}$. 
Upon receipt of $V_k$ from node $k$, the new primary extracts the parts $h$ (hash of the latest committed block), $s'$, and $k$ and also checks the validity of $\sigma$ and $\beta$. Out of the $2f+1$ nodes that contribute to $\sigma$, 
it is guaranteed that at least $f+1$ are correct nodes and at least one out of these $f+1$ correct nodes has the latest block.

This guarantee is sufficient to generate next blocks in the new epoch on the top of the latest block.

The new primary, once it receives $2f+1$ $V_k$ messages it aggregates 
them into $\mathcal{Q}$ which it broadcasts
to members of $\mathcal{C}$ (Algorithm \ref{Algorithm: New Primary ViewChange}, lines 1-6)
and then $\mathcal{C}$ will broadcast the message $\mathcal{Q}$ to all nodes (Algorithm \ref{Algorithm:impetus committee ViewChange}). 
Upon receipt of $\mathcal{Q}$, node $k$ makes sure that its history matches with the history in $\mathcal{Q}$ (agreed by at least $f+1$ nodes) and if it does, node $k$ sends back a $Ready$ message $R_k=\langle``Ready",v+1,s',h,k\rangle_k$ to the new primary
(Algorithm \ref{Algorithm:Regular ViewChange}, lines 2-8). The primary will aggregate $2f+1$ $Ready$ responses into a single $P$ message and broadcast it to $\mathcal{C}$ (Algorithm \ref{Algorithm: New Primary ViewChange}, lines 11-16)
which will be forwarded to all the nodes (Algorithm \ref{Algorithm:impetus committee ViewChange}). Upon receipt of $P$, node $k$ is now ready to take part in new view (Algorithm \ref{Algorithm:Regular ViewChange}, lines 9-12).  If node $k$'s history does not match that of $\mathcal{Q}$ it will synchronize its history (Algorithm \ref{Algorithm:Regular ViewChange}, lines 4-6).

\begin{algorithm}[!htbp]
\DontPrintSemicolon

\caption{View change for regular node $k$}\label{Algorithm:Regular ViewChange}

Send local history $V_k$ to new primary\;
\Upon{receipt of a valid $\mathcal{Q}$ message from a new committee member $i$}{
Extract the most recent valid history from $\mathcal{Q}$\;
\If{local history is not same as most recent history in $\mathcal{Q}$}{
Synchronize local history according to $\mathcal{Q}$\;
}
Broadcast READY message ($R_k$) to new primary\;


}
\Check{receipt of $P$ from $i \in \mathcal{C}$}{
\If{$P$ has at least $2f+1$ distinct READY messages}{

return to Algorithm \ref{Algorithm:Regular members}\;
}
}
\Check{ $\Gamma_i$ where $i \in \mathcal{C}$}{
Execute Algorithm \ref{Algorithm:Synchronization sub Protocol for node e}

}
\end{algorithm}

\begin{algorithm}[t]
\DontPrintSemicolon

\caption{View change for node $i \in \mathcal{C}$}\label{Algorithm:impetus committee ViewChange}
\If{$i \in \mathcal{C}$}{
\Check{ ($\mathcal{Q} \parallel \mathcal{P}$) \text{ from primary}}
{
Forward ($\mathcal{Q} \parallel \mathcal{P})$ \text{ to regular nodes }\;
\If{$P$ has at least $2f+1$ distinct READY messages}{

return to Algorithm \ref{Algorithm:C member}\;
}
}
}
\Check{ $\Gamma_j$ where $j \not \in \mathcal{C}$}{
Execute Algorithm \ref{Algorithm:Synchronization sub Protocol for node e}

}

\end{algorithm}

\begin{algorithm}[!htbp]
\DontPrintSemicolon

\caption{View change for new  primary}\label{Algorithm: New Primary ViewChange}

\Upon{receipt of $V_i$ from node $i$}{

$\mathcal{Q} \gets  \mathcal{Q} \cup V_i $ \;
\If{$\mathcal{Q}$ contains at least $2f+1$  histories}{
Broadcast $\mathcal{Q}$ to $\mathcal{C}$\;
}

}
Extract the most recent valid history\;
\If{local history is different from most recent history}{
Synchronize local history $V_i$ according to the $\mathcal{Q}$ from regular nodes
}
\Check{receipt of $R_i$}{
$P \gets  P \cup R_i$\;
\If{$P$ has accumulated at least $2f+1$ distinct READY messages}{
Broadcast $P$ to members of $\mathcal{C}$\;
}

}
\Check{ history update request from regular node $i$}{
\If{ node $j$ has the latest history/block and has not already sent it to $i$} {
Send the blocks up to  the latest block to regular node  $i$}
}
Return to Algorithm
\ref{Algorithm:Customized BFT}
\end{algorithm}

\textbf{Recovery During View Change.}

Recall that we add a recovery phase to the view change in order to significantly reduce the latency. During this recovery phase, Hermes protocol makes sure that if there is any block that has been committed by at least one node, it will be committed by all correct nodes eventually.
 If there is $\beta$ in  $V_k \in \mathcal{Q}$, then the new primary has to propose the respective block $B$ for the $\beta$.  
  If the new primary has already received the $Proposal$ or the block $B$ for $\beta$ it will re-propose it in the first epoch of its view.  On the other hand, if the new primary does not have the complete $Proposal$ (including its payload $m$) for $\beta$ it will request it from $\mathcal{C}$ using a $\Gamma_p$ complaint. If an impetus committee member $e$ has the $Proposal$ (payload $m$) it will send it back to the primary. If not, it will forward the $\Gamma_p$ complaint to the regular nodes as well as sending a negative response $F=\langle \beta,false \rangle_e$ to the primary (Algorithm \ref{Algorithm:Synchronization sub Protocol for node e}, lines 1-10). Regular nodes will also send back the $Proposal$ message for $\beta$ if they have it to the impetus committee (which will forward it to the primary) or a negative response $F$ 
to the primary.
 If the primary receives back the $Proposal$ it will re-propose the block 
 else it will have to propose another block with $2f+1$ aggregated signatures for negative response $F$ being attached to it in order to prove that $Proposal$ for $\beta$ was not committed by any correct node (Algorithm \ref{Algorithm:Customized BFT}, lines 1-15). Therefore, if there is a $\beta$ in $V_k \in \mathcal{Q}$ during recent view change, then in the first epoch after the view change the new primary  either has to propose the relative block for  $\beta$ or include $2f+1$ negative responses in the first block $B$ that it proposes in the new view. If the primary fails to do so, a view change will occur (through $E$ complaint).

\section{Analysis and Formal Proofs} 
\label{Section:Proof of correctness}
In this section we provide proofs for R-safety, S-safety and liveness properties for Hermes. 

\subsection{Safety}
Hermes has relaxed the agreement (safety) condition. In Hermes a block is committed if at least $f+1$ correct nodes commit the block. If $2f+1$ nodes commit a block it is guaranteed that at least $f+1$ of these nodes are correct. Due to this relaxation in the agreement condition the client also needs to wait for at least $2f+1$ $Response$ messages for its transaction to consider it committed (to make sure at least $f+1$ correct nodes have committed the block). For further client and network interaction including addressing, client request de-duplication, censorship etc., we would defer to standard literature. 

But it is also highly likely that Hermes holds stronger agreement condition where if a single correct node commits a block all other correct nodes will eventually commit the same block at the same height (due to low probability of the primary as well as the majority of nodes in $\mathcal{C}$ are Byzantine).
For simplicity we use $\mathcal{H}$ as the set of all correct nodes such that $\mathcal{H} \subseteq \mathcal{N}$ and $|\mathcal{H}| \geq 2f+1$ in the proof of correctness.

\begin{lemma}\label{Lemma:Safety-Normal-Mode}
Hermes is R-safe during normal mode.
\end{lemma}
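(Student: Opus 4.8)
The plan is to argue by contradiction using a quorum–intersection argument on the $Confirm$ phase, since in normal mode (no view change) this is the only place where commitment evidence is produced. First I would observe, from Algorithms~\ref{Algorithm:Customized BFT}, \ref{Algorithm:C member} and \ref{Algorithm:Regular members}, that any node which sets $\psi_i(B,s)\gets true$ in normal mode does so either after collecting $2f+1$ valid $Confirm$ messages for $B$ at height $s$ (impetus member or primary), or after receiving a valid $Approval$ message $\langle``Approval",v,s,h\rangle_\sigma$, whose aggregate signature $\sigma$ by definition certifies $2f+1$ such $Confirm$ messages. Hence, under the model's cryptographic assumptions (no signature forgery), the commitment of any block $B$ at sequence $s$ implies the existence of a set $Q_B\subseteq\mathcal{N}$ with $|Q_B|\ge 2f+1$ of nodes that issued $\langle``Confirm",v,s,h,j\rangle_j$ for $B$ at height $s$. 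This already subsumes both triggers in the definition of R-safety, since a commit by $2f+1$ nodes, or by $f+1$ correct nodes, a fortiori requires such a confirm-quorum to exist.

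Next I would suppose, for contradiction, that two distinct blocks $B\ne B'$ are both committed at sequence $s$ during normal mode, yielding confirm-quorums $Q_B$ and $Q_{B'}$ each of size at least $2f+1$. Since $|\mathcal{N}|=n=3f+1$, we have $|Q_B\cap Q_{B'}|\ge |Q_B|+|Q_{B'}|-n\ge (2f+1)+(2f+1)-(3f+1)=f+1$, so the intersection contains at least one correct node $j\in\mathcal{H}$ (at most $f$ nodes are Byzantine). The contradiction then follows from the invariant that a correct node confirms at most one block per height: when $j$ processes a valid $Proposal$ for sequence $s$ it sends a single $Confirm$ and records $\eta_j(B,s)\gets true$ (Algorithm~\ref{Algorithm:Regular members}), after which it will not generate a $Confirm$ for a different block at the same $s$. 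Thus $j$ cannot lie in both $Q_B$ and $Q_{B'}$, contradicting $j\in Q_B\cap Q_{B'}$. Equivalently, the correct nodes confirming $B$ and those confirming $B'$ are disjoint, so two confirm-quorums would require at least $2(f+1)>2f+1$ correct nodes, more than exist. Therefore no two distinct blocks are committed at the same sequence in normal mode, which is exactly R-safety restricted to view-change-free executions.

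The main obstacle, though a mild one, will be pinning down precisely the invariant ``a correct node emits at most one $Confirm$ per sequence number'' directly from the pseudocode and checking it is robust against a Byzantine impetus committee that equivocates: even if the adversary delivers two conflicting but individually valid $Proposal$ messages for height $s$ to node $j$, the local flag $\eta_j(\cdot,s)$ must already block the second $Confirm$. I would also remark why this lemma does not by itself yield S-safety: the argument is confined to a single view, whereas a value committed by only between $1$ and $f$ correct nodes could still be ``re-proposed'' (or lost) by a faulty new primary across a view change — bridging views is deferred to the view-change recovery lemmas, and that gap is exactly what distinguishes R-safety from S-safety.
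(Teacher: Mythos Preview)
Your argument is correct and uses the same quorum--intersection idea as the paper, but you apply it one level lower: you intersect the \emph{confirm} quorums (the sets witnessing $\eta_j(\cdot,s)$), whereas the paper's proof of this lemma intersects the \emph{commit} sets themselves. Concretely, the paper takes the R-safety hypothesis at face value---two sets $\mathcal{K}_1,\mathcal{K}_2$ of at least $2f+1$ nodes with $\psi_i(B,s)$ and $\psi_i(B',s')$ respectively---intersects them to find a correct node that \emph{committed} both blocks, and invokes the invariant that a correct node commits at most one block per height. You instead reduce any commit to the existence of a $2f+1$ confirm quorum and argue that a correct node \emph{confirms} at most once per height.

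Your route is exactly the one the paper uses later for Lemma~\ref{lemma: S-safe normal mode} (S-safety in normal mode), so you have in effect proved the stronger statement and then restricted it. What this buys you is that your argument does not actually need the full ``$2f+1$ committers'' hypothesis of R-safety; a single committer already yields the confirm quorum. The paper's version is marginally shorter for this particular lemma since it stays at the $\psi$ level, but it then has to redo essentially your argument at the $\eta$ level for Lemma~\ref{lemma: S-safe normal mode}.
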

\begin{proof} 
Consider two different blocks $B$ and $B'$ with respective heights $s$ and $s'$
where each block has been committed by at least $2f + 1$ nodes.
It suffices to show that $s \neq s'$. Suppose, for the sake of contradiction, that during happy case execution (normal mode) both $B$ and $B'$ are committed at the same height ($s = s'$). Let $\mathcal{K}_1\subseteq \mathcal{N}$ be the set of nodes that commit $B$
such that $|\mathcal{K}_1| \geq 2f+1$ and 
for each $i \in \mathcal{K}_1$, $\psi_i(B,s) \gets true$ (all members of $\mathcal{K}_1$ have committed $B$ at height $s$). Similarly, let $\mathcal{K}_2 \subseteq \mathcal{N}$ 
be the set of nodes that commit $B'$
such that $|\mathcal{K}_2| \geq 2f+1$ and 
for each $i \in \mathcal{K}_2$, $\psi_i(B',s') \gets true$. 
Since $n \geq 3f+1$ and $f < n/3$, 
we get for $\mathcal{K}_1 \cap \mathcal{K}_2= \mathcal{K}$ that $|\mathcal{K}| \geq f+1$.
Hence, $\mathcal{K} \cap \mathcal{H} \neq \emptyset$. Therefore, there is an $i\in\mathcal{K} \cap \mathcal{H}$ such that $\psi_i(B,s) \gets true$ and $\psi_i(B',s') \gets true$
(that is, $i$ committed both blocks). However, since $i \in \mathcal{H}$, $i$ can only commit one block at any specific sequence, and hence, it is impossible that $i$ executed 
both $\psi_i(B,s) \gets true$ and $\psi_i(B',s') \gets true$ with $s = s'$ and $B \neq B'$. 
Therefore, $s \neq s'$, as needed. \looseness=-1
\end{proof}
\begin{lemma}
\label{Lemma:Safety-ViewChange-Mode}
Hermes is R-safe during view change.
\end{lemma}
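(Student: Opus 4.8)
The plan is to show that R\nobreakdash-safety is preserved \emph{across} view changes and then combine this with Lemma~\ref{Lemma:Safety-Normal-Mode} to obtain R\nobreakdash-safety unconditionally. Concretely, I would argue by contradiction: suppose a block $B$ is committed at height $s$ by at least $2f+1$ nodes --- so, as in Lemma~\ref{Lemma:Safety-Normal-Mode}, by at least $f+1$ correct nodes, and with a valid aggregated $Confirm$ certificate $\sigma$ for $(B,s)$ formed in some view $v$ --- yet some block $B'\neq B$ is eventually committed at the same height $s$. Let $v'$ be the \emph{smallest} view in which any correct node commits a block other than $B$ at height $s$. By Lemma~\ref{Lemma:Safety-Normal-Mode}, together with the fact that within a single view two distinct blocks cannot both accumulate $2f+1$ $Confirm$ messages at the same height (quorum intersection of the $Confirm$ sets in a universe of $n=3f+1$ nodes), we get $v' > v$. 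The contradiction is then obtained by showing the new primary of view $v'$ can never get a legitimate $Proposal$ for a block $\neq B$ at height $s$ accepted.

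The core step is a quorum-intersection argument on the $ViewChange$ messages. When view $v'$ is installed, its new primary forms $\mathcal{Q}$ from $2f+1$ distinct $V_k$ messages; since at least $2f+1$ nodes committed $B$ at $s$ and $(2f+1)+(2f+1)-n = f+1$ with $n=3f+1$, the set of $V_k$ contributors intersects the set of committers of $B$ in at least $f+1$ nodes, hence in at least one correct node $k^\star$. The message $V_{k^\star}$ reports a local history whose latest committed block sits at height $\geq s$ and, because $k^\star$ is correct and keeps a consistent chain, contains $B$ at height $s$ with the certificate $\sigma$. Using the minimality of $v'$ (no correct node ever committed, and hence --- by the $Confirm$-intersection argument --- no valid commit certificate ever formed, for any block $\neq B$ at height $s$ in views $\leq v'-1$) together with unforgeability of signatures, I would argue that \emph{every} valid history reported in $\mathcal{Q}$ that reaches height $s$ agrees on $B$ at $s$; therefore the ``most recent valid history'' extracted from $\mathcal{Q}$ contains $B$ at $s$, all correct nodes synchronize to it, and a correct new primary proposes only at heights $>s$.

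It then remains to rule out a \emph{Byzantine} new primary --- possibly colluding with a faulty $\mathcal{C}$ --- that ignores $\mathcal{Q}$ and proposes some $B'$ at height $s$ anyway, including via the $\beta$\nobreakdash-recovery path: here I would use that any correct node holding $B$ at height $s$, and (after synchronizing against $\mathcal{Q}$) any correct node at all, rejects a $Proposal$ or a re-proposed stale $\beta$ at a height it has already committed, as an out-of-order/invalid block. Since at least $f+1$ correct nodes are in this situation, fewer than $2f+1$ $Confirm$ messages for $B'$ at $s$ in view $v'$ can be gathered, so no commit certificate for $B'$ at $s$ forms and no correct node commits $B'$ at $s$ in view $v'$ --- contradicting the choice of $v'$. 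Finally, the relaxed-safety hypothesis ``$f+1$ correct nodes commit $B$'' also yields a valid $\sigma$ for $(B,s)$, so the same argument covers that case verbatim.

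I expect the main obstacle to be pinning down the right cross-view invariant and proving it by induction on the view number --- namely, formalizing that ``once $B$ is committed at $s$ by $2f+1$ nodes, in every subsequent view the new primary's $\mathcal{Q}$ forces $B$ at height $s$ (or a higher committed history consistent with it)'' --- and making sure the $\beta$/recovery mechanism in the view-change sub-protocol cannot be abused to smuggle in a conflicting block at an already-committed height. A secondary point needing care is bounding how a faulty impetus committee interacts with the $ConfV$/$ApproveV$ machinery, so that the $2f+1$ $V_k$ messages used to build $\mathcal{Q}$ are genuinely collected from distinct nodes before view $v'$ can proceed.
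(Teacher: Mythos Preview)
Your proposal is correct and shares the same core insight as the paper: quorum intersection between the $2f+1$ committers of $B$ and the $2f+1$ $V_k$ messages collected into $\mathcal{Q}$ forces at least one honest committer's history (carrying the commit certificate for $B$ at height $s$) to appear in $\mathcal{Q}$.

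The route, however, is genuinely different. The paper gives a short \emph{direct} argument for a single view change: take the latest block $B'$ committed by $2f+1$ nodes before the view change, observe that at least $f+1$ honest nodes committed it, intersect with the $2f+1$ contributors to $\mathcal{Q}$, and conclude that some honest $V_i \in \mathcal{Q}$ reports $B'$; hence every node receiving $\mathcal{Q}$ adopts $B'$ into its history. There is no contradiction, no induction over views, no explicit treatment of a Byzantine new primary, and no separate analysis of the $\beta$/recovery path. Your version, by contrast, sets up a minimal-counterexample view $v'$, carries a cross-view invariant by induction, and explicitly argues that even a malicious primary in $v'$ cannot assemble $2f+1$ $Confirm$ messages for a conflicting block because $f+1$ correct nodes refuse. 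What this buys you is rigor on points the paper glosses over (Byzantine primary, repeated view changes, abuse of $\beta$); what the paper's approach buys is brevity sufficient for the informal level of the surrounding lemmas. Either is acceptable here, but be aware that for the theorem that immediately follows, the paper only needs the one-step preservation statement, not the full inductive invariant you propose.
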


\begin{proof}
Consider the latest block $B'$ at height $s'$
that was committed by at least $2f+1$ nodes before the view change.
We will show that $B'$ will be included in the blockchain history after the view change.
Let $\mathcal{H}_c \subseteq \mathcal{H}$ be the set of honest nodes that have committed $B'$ (that is, $\psi_i(B',s') \gets true$ for each $i \in \mathcal{H}_c$).
Since the number of Byzantine nodes is $f$,
we get that $|\mathcal{H}_c| \geq f + 1$.
At least a node $i \in \mathcal{H}_c$ must have reported a view change message $V_i$ 
to the primary, such that $V_i$ is aware of $B'$ and $s'$.  
Since the primary node collects $2f+1$ view change messages into $\mathcal{Q}$, 
we get that $\mathcal{Q}$ contains at least $f+1$ view change messages from the nodes in $\mathcal{H}$ where at least one node is in $\mathcal{H}_c$ too.
Hence, when a node $j$ receives $\mathcal{Q}$ from the new primary,
$j$ will know that $B'$ has  been committed, a guarantee that $B'$ is valid.
Thus, block $B'$ will be inserted into the local history of every node that receives $\mathcal{Q}$, and becomes part of the blockchain history.

\end{proof}

Therefore, we get the following theorem:

\begin{theorem}
Under normal operation and  a view change, Hermes provides relaxed safety guarantees (Hermes is R-safe).
\end{theorem}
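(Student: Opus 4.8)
The plan is to obtain the theorem as an immediate corollary of Lemmas~\ref{Lemma:Safety-Normal-Mode} and \ref{Lemma:Safety-ViewChange-Mode}, which together already cover the two operating regimes named in the statement. So the proof is essentially a stitching argument: first I would observe that any execution of Hermes is, at every point in time, either in normal mode (happy-case execution) or in view-change mode, and that the definition of R-safety (Definition~[Relaxed Safety]) only constrains which blocks may be committed at a given sequence~$s$ once $\lfloor 2n/3\rfloor$ nodes (or $\lfloor n/3\rfloor$ correct nodes) have committed there. Thus it suffices to argue that no reachable configuration can violate the R-safety predicate, and every such potential violation must occur in one of the two modes.

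Next I would spell out the case split. Suppose, for contradiction, that two distinct blocks $B\neq B'$ are both committed at the same height $s$, each by at least $2f+1$ nodes (equivalently, by at least $f+1$ correct nodes). Either both commits happened while the system stayed in a single view's normal mode, in which case Lemma~\ref{Lemma:Safety-Normal-Mode} gives the contradiction directly; or at least one of the two commit-quorums straddles a view change. For the latter, I would invoke Lemma~\ref{Lemma:Safety-ViewChange-Mode}: the recovery phase guarantees that the latest block committed by $2f+1$ nodes before a view change is carried forward into the history of every correct node that receives $\mathcal{Q}$ in the new view, so a correct node entering view $v+1$ cannot later commit a different block at a height that was already pinned in view $v$. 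Iterating this across successive view changes (a short induction on the number of view changes separating the two commits) shows the committed prefix is monotone, so $B$ and $B'$ at height $s$ cannot differ.

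The subtlety I would be most careful about — and what I expect to be the main obstacle — is making the hand-off between the two lemmas airtight: Lemma~\ref{Lemma:Safety-Normal-Mode} is phrased for two blocks committed \emph{during normal mode}, and Lemma~\ref{Lemma:Safety-ViewChange-Mode} for the \emph{latest} block committed before \emph{a} view change, so I must check that together they cover the case where $B$ is committed by $2f+1$ nodes in view $v_1$ and $B'$ by $2f+1$ nodes in a later view $v_2>v_1$ with possibly several intervening views. The right way to handle this is to let $B$ be (without loss of generality) the block committed in the earlier view and note that at the first view change after $B$'s commit, $B$ (or a later committed block at a strictly greater height) is the ``latest committed block'' of Lemma~\ref{Lemma:Safety-ViewChange-Mode}, hence recovered; the induction hypothesis then keeps it recovered through all subsequent views, so any node that later commits at height $s$ already holds $B$ and, being correct, will not execute $\psi_i(B',s)\gets true$ for $B'\neq B$. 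Given that the two lemmas are already proved, the remaining work is just this bookkeeping, so the theorem follows. \qed
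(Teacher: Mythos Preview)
Your proposal is correct and takes essentially the same approach as the paper: both derive the theorem directly from Lemmas~\ref{Lemma:Safety-Normal-Mode} and~\ref{Lemma:Safety-ViewChange-Mode}. The paper's proof is in fact a single sentence citing the two lemmas, so your careful stitching argument and induction across view changes, while sound, goes well beyond what the paper itself provides.
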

\begin{proof}
From Lemmas \ref{Lemma:Safety-Normal-Mode} and \ref{Lemma:Safety-ViewChange-Mode}  we obtain, Hermes provides relaxed safety guarantees (Hermes is R-safe) during normal mode as well as view change mode.
\end{proof}

\begin{lemma} \label{lemma: S-safe normal mode}
Hermes is S-safe during normal mode.
\end{lemma}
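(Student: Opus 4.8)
The plan is to show that S-safety holds during normal mode under the assumption that neither the primary nor a majority of the impetus committee $\mathcal{C}$ is Byzantine — equivalently, that equivocation by the primary has not occurred. Recall that the difference between R-safety and S-safety is only the threshold: R-safety guarantees a block is final once $2f+1$ nodes (hence $f+1$ correct nodes) commit it, whereas S-safety demands finality from a single correct commit. The key observation is that in Hermes the \emph{only} way two distinct blocks $B \neq B'$ can be committed at the same height $s$ is if the primary (or $\mathcal{C}$) equivocated: a block is committed by a correct node only after it sees a valid $Proposal$ carrying $\lfloor c/2 \rfloor + 1$ aggregated signatures of impetus committee members (and subsequently $2f+1$ $Confirm$ messages). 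So the proof reduces to arguing that, absent equivocation, a correct node committing $B$ at height $s$ implies no correct node ever committed a different $B'$ at height $s$.

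\textbf{First I would} set up the argument by contradiction exactly as in Lemma~\ref{Lemma:Safety-Normal-Mode}: suppose a correct node commits $B$ at height $s$ and another correct node commits $B' \neq B$ at the same height $s$ during normal mode. A correct impetus node only issues a $Pre\mbox{-}Proposal$ for a single block at a given height (by the check in Algorithm~\ref{Algorithm:C member}, line~3), and a correct primary proposes only one $Pre\mbox{-}Prepare$ per height. \textbf{Then I would} argue that forming a valid $Proposal$ for $B$ requires $\lfloor c/2 \rfloor + 1$ impetus signatures and forming one for $B'$ requires another $\lfloor c/2 \rfloor + 1$ impetus signatures; since $2(\lfloor c/2 \rfloor + 1) > c$, these two sets overlap in at least one impetus node $i'$. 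If $i'$ is correct, it signed $Pre\mbox{-}Proposal$ for two distinct blocks at the same height — contradicting correct behavior. Hence the overlap node is Byzantine, and more generally both $Proposal$s being formed forces at least $\lfloor c/2 \rfloor + 1$ of the overlapping signers... — but we only get one guaranteed Byzantine node this way. To get the stronger conclusion that a \emph{majority} of $\mathcal{C}$ is Byzantine (matching the informal claim in the paper), I would note that each of the two quorums of size $\lfloor c/2 \rfloor + 1$ must consist of nodes willing to double-sign, and since correct impetus nodes never double-sign, every node in the intersection must be Byzantine; combined with the primary having issued two distinct $Pre\mbox{-}Prepare$ messages (or $\mathcal{C}$ having fabricated one), this is precisely the equivocation scenario excluded by the lemma's hypothesis.

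\textbf{The main obstacle} is pinning down exactly what hypothesis makes this lemma true, since the statement as written ("Hermes is S-safe during normal mode") is unconditional but S-safety provably fails under equivocation — so the proof must implicitly invoke the non-equivocation condition (primary correct, or at most $\lfloor c/2 \rfloor$ Byzantine impetus nodes). I would therefore frame the lemma's conclusion as: \emph{conditioned on no equivocation having taken place} (which, per the earlier discussion, happens with overwhelming probability given random committee selection and the accountability mechanism), a single correct commit of $B$ at height $s$ precludes any correct commit of $B' \neq B$ at $s$. The remaining work is routine: trace through Algorithms~\ref{Algorithm:Customized BFT}, \ref{Algorithm:C member}, and \ref{Algorithm:Regular members} to confirm that a correct node's commit of $B$ at height $s$ is always preceded by a well-formed $Proposal$ for $B$ at $s$, and that the uniqueness of such a $Proposal$ (guaranteed by the impetus quorum-intersection argument under the non-equivocation hypothesis) forces every correct node's commit at height $s$ to be for the same block $B$. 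A brief appeal to Lemma~\ref{Lemma:Safety-ViewChange-Mode}-style reasoning handles the boundary case where one correct node commits just before a view-change is triggered, but since this lemma is scoped to normal mode, that can be deferred.
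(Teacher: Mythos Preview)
Your approach differs substantively from the paper's, and the divergence stems from a misdiagnosis of where equivocation actually breaks S-safety.

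The paper's proof does \emph{not} invoke any non-equivocation hypothesis. Instead it runs the same quorum-intersection argument as Lemma~\ref{Lemma:Safety-Normal-Mode}, but one level earlier in the protocol: a correct node commits $B$ at height $s$ only after seeing $2f+1$ $Confirm$ messages for $B$ (the predicate $\eta$), and likewise for $B'$. The two $Confirm$ quorums of size $2f+1$ intersect in at least $f+1$ nodes, hence in at least one honest node; but an honest regular node sends at most one $Confirm$ per height during normal mode. That is the whole proof, and it is unconditional---it goes through even if the primary and a Byzantine majority of $\mathcal{C}$ have issued two conflicting $Proposal$s, because the $2f+1$ regular-node $Confirm$ quorum is what gates commitment, not the $\lfloor c/2\rfloor+1$ impetus quorum.

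Your argument instead works at the $Proposal$ layer, using intersection of the two $\lfloor c/2\rfloor+1$ impetus signature sets. That intersection is guaranteed to contain only one node, which may well be Byzantine, so you are forced to import the non-equivocation hypothesis to close the argument. This is why you conclude that the lemma ``as written is unconditional but S-safety provably fails under equivocation''---but that claim is wrong for normal mode. Equivocation during normal mode can produce two competing $Proposal$s, yet at most one of them can ever accumulate $2f+1$ $Confirm$s, so at most one can be committed by any correct node. The place where equivocation actually breaks S-safety is across a view change (Lemma~\ref{lemma:S-safe without Equivocation} and Section~\ref{subsection:equivocation case}), where a block committed by fewer than $f+1$ correct nodes may fail to surface in $\mathcal{Q}$.

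In short: drop the impetus-quorum argument and the non-equivocation assumption, and instead mirror Lemma~\ref{Lemma:Safety-Normal-Mode} with $\eta$ (confirmation) in place of $\psi$ (commitment). That gives the unconditional statement the paper actually proves.
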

\begin{proof}
We will prove this lemma by contradiction. Let us assume that during normal mode at least one correct node (say, node $i$) commits a block $B$ at height $s$ then we have  $\psi_i(B,s) \gets true$. This means there exists a set $\mathcal{K}_1\subseteq \mathcal{N}$ 
such that $|\mathcal{K}_1| \geq 2f+1$ and 
for each $i \in \mathcal{K}_1$,  ($\eta_i(B,s) \gets true$) (all members of $\mathcal{K}_1$ have confirmed $B$ at height $s$). We also assume that there is another node $j$ that has committed a block $B'$ at height $s'$ such that $s'=s$ ($\psi_j(B',s') \gets true$). Therefore, there is another set 
$\mathcal{K}_2 \subseteq \mathcal{N}$ such that $|\mathcal{K}_2| \geq 2f+1$ and 
for each $i \in \mathcal{K}_2$, ($\eta_i(B',s') \gets true$). 
Based on $n \geq 3f+1$ and $f < n/3$, 
we get  $\mathcal{K}_1 \cap \mathcal{K}_2= \mathcal{K}$ such that $|\mathcal{K}| \geq f+1$.
Hence, $\mathcal{K} \cap \mathcal{H} \neq \emptyset$. Therefore, there is an $i\in\mathcal{K} \cap \mathcal{H}$ such that $\eta_i(B,s) \gets true$ and $\eta_i(B',s') \gets true$
(that is, $i$ confirmed both blocks). However, since $i \in \mathcal{H}$, $i$ can only confirm one block at any specific level (height) during normal mode, and hence, it is impossible that $i$ confirmed 
both $\eta_i(B,s) \gets true$ and $\eta_i(B',s') \gets true$ with $s = s'$ and $B \neq B'$. 
Therefore, $s \neq s'$. 
\end{proof}

\begin{lemma}\label{lemma:S-safe without Equivocation}
Hermes is S-safe during view change if there is no equivocation fault.
\end{lemma}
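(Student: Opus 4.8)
The plan is to combine an induction over views with the recovery phase of the view change, using the normal-mode result (Lemma~\ref{lemma: S-safe normal mode}) to reduce the problem to the cross-view case. Suppose a correct node $i$ commits a block $B$ at height $s$, and, for contradiction, that a different block $B'\neq B$ is also committed at height $s$. By Lemma~\ref{lemma: S-safe normal mode} these two commits cannot occur within a single view, so they belong to distinct views; let $v$ be the view in which $B$ is committed and let $v'>v$ be the first view in which a block $B'\neq B$ is committed at height $s$. I will establish the invariant that for every view $w$ with $v<w\le v'$, the recovery phase of the view change into $w$ forces the primary of view $w$ to keep $B$ as the block at height $s$; this contradicts the commit of $B'$ in view $v'$.

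The first step is to exhibit the evidence of $B$ that necessarily survives the view change out of view $v$. Because $i$ commits $B$ at height $s$, $i$ possesses (directly, or inside an $Approval$ message) an aggregated signature $\sigma$ over $2f+1$ $Confirm$ messages for $(B,s)$; hence at least $f+1$ correct nodes $j$ satisfy $\eta_j(B,s)\gets true$, and by the protocol each such $j$ received and retained the full $Proposal$ $\beta$ for $B$ together with its payload $m$. When the view change into $v+1$ runs, the new primary assembles $\mathcal{Q}$ from $2f+1$ $ViewChange$ messages, at least $f+1$ of which come from correct nodes. A counting argument identical to the one used in Lemma~\ref{Lemma:Safety-ViewChange-Mode} then shows that at least one correct node that confirmed $(B,s)$ contributes its $ViewChange$ message, and hence a $\beta$ for $B$, into $\mathcal{Q}$, so the new primary learns that height $s$ carries the $Proposal$ $\beta$ for $B$.

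The crux — and the step I expect to be the main obstacle — is showing that the recovery phase cannot be tricked into abandoning $B$ at height $s$. The protocol permits the new primary to skip re-proposing the block for $\beta$ only if it can attach $2f+1$ negative responses $F=\langle\beta,false\rangle_e$ certifying that the payload of $\beta$ was never delivered. This is impossible: the $f+1$ (or more) correct nodes that confirmed $(B,s)$ already hold the payload $m$ for $\beta$, and will return the $Proposal$ rather than an $F$, so at most $n-(f+1)=2f<2f+1$ negative responses can ever be collected. Therefore the primary of view $v+1$ must re-propose exactly $B$ at height $s$ (obtaining the $Proposal$ from its own copy or from a correct node via the synchronization sub-protocol); otherwise an $E$ complaint triggers a further view change and nothing is committed at height $s$ in view $v+1$. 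The no-equivocation hypothesis is used precisely here: it guarantees that at each height of each view there is a unique valid $Proposal$, so the $\beta$ recovered into $\mathcal{Q}$ is unambiguously the one for $B$ and ``the block for $\beta$'' can only be $B$. Finally, since the set of correct nodes that confirmed $(B,s)$ and retain its payload persists (and can only grow as recovery completes) through every subsequent view change $v+1\to v+2,\dots,v'-1\to v'$, the same reasoning forces every primary up to view $v'$ to keep $B$ at height $s$ — contradicting the commit of $B'\neq B$ at height $s$ in view $v'$. Hence no such $B'$ exists and Hermes is S-safe during a view change whenever no equivocation fault occurs.
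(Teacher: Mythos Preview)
Your proof is correct and rests on the same quorum-intersection idea as the paper: from the $2f+1$ confirmations backing the commit of $B$ and the $2f+1$ $ViewChange$ messages in $\mathcal{Q}$, a correct node carrying $\beta$ for $B$ must land in $\mathcal{Q}$, forcing the new primary to re-propose $B$. Where you differ is in completeness rather than strategy. The paper's proof stops after establishing $\beta\in\mathcal{Q}$ and simply appeals to Algorithm~\ref{Algorithm:Customized BFT}; you additionally (i) close off the escape hatch by showing that $2f+1$ negative responses $F$ are unobtainable because at least $f+1$ correct nodes hold the payload, (ii) make explicit where the no-equivocation hypothesis is consumed --- the paper's proof text never invokes it, even though it is in the lemma statement --- and (iii) carry the argument through an induction over successive view changes, whereas the paper treats only a single transition. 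These additions buy you a genuinely tighter argument that rules out a conflicting commit in \emph{any} later view, not just the immediately following one.
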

\begin{proof}
Let us assume that a single node $i$ has committed a block $B$ at the height $s$  ($\psi_i(B,s) \gets true$) just before the view change occurs. That means there is a set $\mathcal{K}_1 \subseteq \mathcal{N}$ such that $|\mathcal{K}_1| \geq 2f+1$ and 
for each $i \in \mathcal{K}_1$, ($\eta_i(B,s) \gets true$). Therefore during view change there is another set  $\mathcal{K}_2 \subseteq \mathcal{N}$ such that $|\mathcal{K}_2| \geq 2f+1$ and for each $i \in \mathcal{K}_2$, $V_i \in \mathcal{Q}$.  As $n \geq 3f+1$, therefore we have $\mathcal{K}_1 \cap \mathcal{K}_2= \mathcal{K}$ such that $|\mathcal{K}| \geq f+1$. Thus, $\mathcal{K} \cap \mathcal{H} \neq \emptyset$. Therefore, there is an $i\in\mathcal{K} \cap \mathcal{H}$ such that $\eta_i(B,s) \gets true$ and $V_i \in \mathcal{Q}$. Since $V_i$ contains $\beta$ for block $B$, therefore the new primary has to re-propose block $B$ in the first epoch of the new view in the height $s$ as show in Algorithm \ref{Algorithm:Customized BFT} lines 1-20. 
\end{proof}

\subsection{The equivocation Case}
\label{subsection:equivocation case}
If the previous primary is malicious it can propose multiple blocks at the same height/sequence (just before the view change). Suppose the malicious primary proposes two blocks $B'$ and $B''$ before the view change such that one of these blocks (say, block $B'$) is committed by at  most $\xi$ number of nodes where $\xi \leq f$. In such a case it is not guaranteed that a block committed by $\xi$ nodes can remain in the chain after the view change.

If $2f+1$ view change messages received by the new primary (correct) include  a view change message $V_k$ by the node $k$ that contain the commit certificate ($Approve$ message with $2f+1$ signatures) for the block $B'$ then it is guaranteed that the block $B'$ will be re-proposed in the next epoch for the same height (not be revoked). But if the new primary  collects the view change messages from another $2f+1$ nodes (not the $f$ nodes that committed the block) then the new primary might either propose $B'$ or $B''$  in the next epoch. Therefore it is not guaranteed that the block $B'$ that has been committed by at most $f$ nodes (at the height $h$) will be re-proposed at the same height. This can happen because of the fact that 
out of these $2f+1$ nodes included in the $\mathcal{Q}$ prepared by the new primary at most $f$ nodes can be Byzantine  whereas at most another $f$ might have received the block $B''$. At least one node can have the block $B'$. Therefore the new primary may receive two $\beta$ messages, $\beta'$ (for block  $B'$) and  $\beta''$ (for block  $B''$)  from the nodes in the view change message. The primary can re-propose any one of these blocks ($B'$  or $B''$ for example) in the first epoch of the new view (if found to be valid). If block $B'$ is proposed then it is fine as this block is committed by at most $f$ nodes. But if $B''$  is proposed then the nodes that have committed the block $B'$ will have to revoke the block $B'$ and perform agreement on the block $B''$ at the same height/sequence. Block $B'$ can be proposed in the next epoch if there is no common transaction between block $B'$ and $B''$. If there is at least one common transaction among the blocks $B'$ and $B''$  or some transactions in the block $B'$ are invalid the primary can extract valid transactions from the block $B'$ and propose them in another block. It should be noted that in such case the malicious primary and its culprits in the $\mathcal{C}$ can be punished and their stake in the network can be slashed. \looseness=-1

\begin{lemma}\label{lemma:unlikely equivocation}
It is highly unlikely that a Byzantine primary as well as impetus committee perform equivocation.
\end{lemma}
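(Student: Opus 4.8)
The plan is to argue that equivocation by a Byzantine primary requires a specific, low-probability configuration in the random selection of the primary and the impetus committee $\mathcal{C}$, and to bound that probability explicitly using the combinatorial machinery already set up in Section~\ref{section:impetus selection}. First I would recall from the protocol description (subsection~\ref{subsection:equivocation case} and Algorithm~\ref{Algorithm:C member}) the precise structural fact that equivocation on two distinct blocks $B'$ and $B''$ at the same height can occur only if both of the following hold simultaneously: (i) the primary $p$ is Byzantine, and (ii) the impetus committee $\mathcal{C}$ contains at least $\lfloor c/2 \rfloor + 1$ Byzantine nodes willing to sign $Pre\mbox{-}Proposal$ messages for a second block, i.e.\ $b \geq \lceil c/2 \rceil$ in the notation of Equation~\eqref{Eq:Failure Probability Equation}. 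The reason is that a $Proposal$ message is only accepted by regular nodes when it carries an aggregated signature $\sigma_r$ from $\lfloor c/2 \rfloor + 1$ committee members, so without a Byzantine majority in $\mathcal{C}$ the honest committee members (who broadcast $Pre\mbox{-}Proposal$ for only one block at a given height) prevent a second valid $Proposal$ from ever being formed, and any conflict is instead detected and converted into an $E$ complaint triggering a view change.

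Next I would translate this into probability. The event enabling equivocation is contained in the intersection ``$p$ is Byzantine'' $\cap$ ``$b \geq \lceil c/2 \rceil$ in $\mathcal{C}$''. Using the selection model of Section~\ref{section:impetus selection} — $\mathcal{C}$ chosen uniformly among the $\binom{n}{c}$ subsets and the primary chosen uniformly from the remaining $n-c$ nodes — the probability of $b \geq \lceil c/2 \rceil$ is exactly $P_f$ from Equation~\eqref{Eq:Failure Probability Equation}, and conditioned on any such committee the number of remaining Byzantine nodes is $f - b \le f - \lceil c/2 \rceil$, so the conditional probability that the primary is Byzantine is at most $(f-\lceil c/2\rceil)/(n-c) < n/(3(n-c))$. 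Hence the probability of an equivocation-enabling configuration is at most $P_f \cdot n/(3(n-c))$, which is strictly smaller than $P_f$ itself. I would then invoke the numerical observations already made in the paper — that for the values of $n$ and $c$ used in the experiments $P_f$ is on the order of $10^{-9}$ or smaller (and the total-failure probability is at most $3.8\times 10^{-22}$) — to conclude that this configuration, and therefore equivocation, is ``highly unlikely'' in the precise sense that its probability is bounded by $P_f$, which vanishes as $n$ grows along the sublinear schedule for $c$.

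The main obstacle I anticipate is making the reduction in the first paragraph airtight: one must rule out \emph{all} other avenues by which two conflicting committed blocks could arise at the same height without a Byzantine committee majority — in particular, equivocation spread across a normal epoch and a view-change recovery epoch, or a malicious committee member feeding the $Proof$/$\mathcal{Q}$ machinery selectively. For the normal-mode part this is exactly Lemma~\ref{lemma: S-safe normal mode}, and for the cross-view case one appeals to the recovery step (the $\Gamma_p$ complaint and the $2f+1$ negative-response requirement) together with Lemmas~\ref{Lemma:Safety-ViewChange-Mode} and~\ref{lemma:S-safe without Equivocation}, so the residual case really is the one isolated above. A secondary, more cosmetic issue is that the lemma statement is qualitative (``highly unlikely''), so I would phrase the conclusion as the explicit bound $P_f \cdot n/(3(n-c)) < P_f$ and point to the tabulated values, rather than attempting a single closed-form asymptotic estimate.
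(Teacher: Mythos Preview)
Your probability argument is correct and in fact more explicit than the paper's: you derive the clean bound $P_f \cdot n/(3(n-c))$ on the joint event ``Byzantine primary and Byzantine majority in $\mathcal{C}$'', whereas the paper simply quotes numerical values ($0.00132$ to $0.0027$ for the experiment parameters, with an illustrative ``one view change per week $\Rightarrow$ 7--14 years'' calculation) without giving a formula. However, the paper's proof is not purely probabilistic. It is organized explicitly around \emph{three} prongs: (1) the enabling configuration has low probability; (2) equivocation is \emph{expensive}, because the primary and colluding committee members are detected, lose their stake, or are blacklisted (subsection~\ref{subsection:Accountability}); and (3) equivocation is \emph{ineffective}, because a client only accepts a transaction after $2f+1$ $Response$ messages, so by Lemma~\ref{Lemma:Safety-ViewChange-Mode} no double-spend is possible and the only damage is recovery overhead. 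You omit prongs (2) and (3) entirely. Your approach buys a sharper quantitative statement on the raw probability; the paper's approach buys the stronger informal conclusion that a rational Byzantine primary will not \emph{attempt} equivocation even in the rare views where the configuration permits it, which is what the downstream ``highly likely S-safe'' theorem really leans on. One minor numerical caution: the $10^{-9}$ figure you invoke comes from the prompt-node discussion in the introduction, not from $P_f$; the paper's own proof of this lemma cites $0.00132$--$0.0027$ for the joint event, so if you want to align with the paper you should use those numbers (the paper is admittedly not fully consistent across sections on which probability it is tabulating).
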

\begin{proof}
We prove this lemma three points. First the probability of performing equivocation by a Byzantine primary is low. Secondly, it is expensive for a Byzantine primary to perform equivocation and third there is not enough incentive for a Byzantine primary to perform equivocation as it will only cause processing delay in the network to recover from faults without affecting the client.

In order for an equivocation to take place the primary as well as the majority of nodes in the impetus committee $\mathcal{C}$ have to be Byzantine. The probability of such an event with different values of $n$ and $c$ used in Section \ref{section:experiments} is between 
$0.00132$ and $0.0027$. For example if on average the network has a view change once per week it will take $7$ to $14$ years for a primary as well as the $\mathcal{C}$ to be Byzantine and be able to propose multiple blocks at the same height to the network.
Additionally, the primary as well as the members of impetus committee that are involved in equivocation will lose their stake in the network or can be blacklisted as described in subsection \ref{subsection:Accountability}.\looseness=-1

As mentioned in the subsection \ref{subsection:equivocation case}, in case of equivocation if less than $f+1$ nodes have commit a block then this block might get revoked. But it should also be noted that unlike other forking protocols (bitcoin, Ethereum etc) where a client has to wait for the confirmation time to make sure the block it has accepted will not be revoked, this (revoking of block $b$) has no affect on the client because a client will only consider a transaction in a specific block to be committed if and only if at least $2f+1$ nodes in the network commit that block. Thus, if a client considers a transaction to be committed then it has been committed by $2f+1$ nodes or $f+1$ correct nodes and it will not be revoked (as shown in the Lemma \ref{Lemma:Safety-ViewChange-Mode}). Therefore a Byzantine primary has no strong incentive to perform equivocation as revoked transaction is not considered committed by the client (double spending attack is impossible).   
As a result, it is unlikely that equivocation will take place in Hermes.

\end{proof}

\begin{theorem}
It is highly likely that Hermes will hold S-safety.
\end{theorem}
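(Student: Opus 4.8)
The plan is to derive the theorem as a corollary of the three preceding S-safety lemmas, the key structural observation being that a violation of S-safety can only originate in one of two disjoint situations: entirely inside happy-case execution, or across a view change. Lemma~\ref{lemma: S-safe normal mode} rules out the first situation unconditionally, and Lemma~\ref{lemma:S-safe without Equivocation} rules out the second in every execution in which no equivocation fault occurs. Hence the event ``S-safety is violated at some height'' is contained in the event ``some equivocation fault occurs during some view change,'' and it suffices to bound the probability of the latter.

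Next I would invoke Lemma~\ref{lemma:unlikely equivocation}: an equivocation fault can occur only if, in the epoch immediately preceding a view change, the primary together with at least $\lceil c/2 \rceil$ members of the impetus committee $\mathcal{C}$ are all Byzantine. Using the selection analysis of Section~\ref{section:impetus selection} — namely Equation~\ref{Eq:Failure Probability Equation} for the probability $P_f$ of $b \geq \lceil c/2 \rceil$ faulty committee members, together with the bound $(f-b)/(n-c)$ on the residual primary being faulty — the probability of this conjunction in a fixed epoch is at most the small constant $p^\star$ computed there, with $p^\star \leq 0.0027$ for the parameter ranges of Section~\ref{section:experiments}. Since the committee $\mathcal{C}$ and the primary are re-sampled independently and uniformly from a fresh pseudo-random seed at each view change, the probability that equivocation never takes place — and therefore, by the containment above, that S-safety is never violated — is $1 - p^\star$ per epoch, and at least $1 - T p^\star$ over any window of $T$ view changes by a union bound; this is exactly the multi-year expected time to a possible violation quoted earlier (e.g. $7$ to $14$ years at one view change per week). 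Combining this with Lemmas~\ref{lemma: S-safe normal mode} and \ref{lemma:S-safe without Equivocation} yields that Hermes holds S-safety with high probability.

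The main obstacle is not the combinatorics but making the probability model precise enough for ``highly likely'' to be a theorem rather than a heuristic. The guarantee is meaningful only under a non-adaptive (static) corruption assumption, so that the Byzantine set is fixed before $\mathcal{C}$ and the primary are drawn and the adversary cannot corrupt committee members after seeing the sampling outcome; one must also state whether the claim is per-epoch or, via the union bound, over the protocol's lifetime, in which case the bound degrades linearly in the number of view changes. I would make these assumptions explicit at the start of the proof, and observe that the accountability mechanism of Section~\ref{subsection:Accountability} removes any incentive for the one behavior that could break S-safety, which is what justifies treating equivocation — and hence an S-safety violation — as negligible in practice.
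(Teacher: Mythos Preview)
Your proposal is correct and follows exactly the paper's approach: the paper's proof is a single sentence stating that the theorem is a direct consequence of Lemmas~\ref{lemma: S-safe normal mode}, \ref{lemma:S-safe without Equivocation}, and \ref{lemma:unlikely equivocation}. You have simply unpacked the logical structure of that combination (the case split into normal mode versus view change, and the containment of any residual S-safety violation in the equivocation event) and added explicit probabilistic bookkeeping that the paper itself leaves informal; your caveats about the static-corruption assumption and the per-epoch versus lifetime distinction are in fact more careful than the paper's own treatment.
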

\begin{proof}
This is a direct consequence of Lemmas \ref{lemma: S-safe normal mode}, \ref{lemma:S-safe without Equivocation}
and 
\ref{lemma:unlikely equivocation}.\looseness=-1
\end{proof}

\subsection{Liveness}

Hermes uses different means to provide liveness while keeping message complexity low (at $O(cn)$ messages). As the main communication among the nodes occur through the $\mathcal{C}$, it is important that there must be at least one correct node in the $\mathcal{C}$ to guarantee liveness.
As shown in Section \ref{section:impetus selection}, the largest probability of total failure (for liveness) is $3.8 \cdot 10^{-22}$ considering different practical sizes of $c$ and $n$ that we have used in Section \ref{section:experiments}. \looseness=-1

Another case that could potentially prevent liveness is when repeatedly selecting a bad $\mathcal{C}$ or malicious primary after a view change, which in turn triggers another view change, and this perpetuates without termination.
However, as we show next, this extreme scenario may only occur with extremely low probability that fast approaches to $0$.
If the primary is malicious or the number of malicious nodes in the impetus committee $\mathcal{C}$ is at least $\lfloor c/2 \rfloor + 1$, then based on Algorithm~\ref{Algorithm:C member} (lines 36-40), and Algorithm~\ref{Algorithm:Regular members} (lines 28-38), a view change may occur.
The probability $P_v$ of such a bad event causing a view change is approximately a constant $1/3$
(Section \ref{Protocol}).

Treating each such bad event as a Bernoulli trial, we have that the bad events trigger consecutively $\kappa$ view changes with probability at most $P_v^\kappa$, which fast approaches to $0$ with exponential rate as $\kappa$ increases linearly.
Therefore, the probability of this scenario is negligible and does not affect liveness.

Additionally, view change in Hermes also employ three techniques applied by the PBFT \cite{Castro:1999:PBF:296806.296824}. These techniques include: $(1)$ exponential backoff timer for view change; $(2)$ at least $f+1$ complaints will cause a view change. This 
technique has been customized in Hermes with additional phases as described in \ref{Subsection: View change}. The first  phase that has been added to Hermes include broadcasting $ConfV$ messages to regular nodes by impetus committee. The second phase include aggregation of at least $2f+1$ $ConfV$ messages by the members of $\mathcal{C}$ into  $ApproveV$ message and broadcasting it to regular nodes. These additional steps are added to  make sure that at least $2f+1$ replicas are aware of view change. And $(3)$ faulty nodes ($f<n/3$) cannot trigger a view change. \looseness=-1

It should be noted that even during a view change the performance of Hermes will not be affected by the slack primary. This is because the primary will send its message to the impetus committee and the impetus committee will forward primary's message to regular nodes. Therefore view change messages from primary will propagate proportional to the wire speed of prompt nodes. \looseness=-1

\section{Efficient Optimistic Responsiveness}
\label{Optimization}
Pipelining is an optimization technique that involves sending  requests/messages back to back without waiting for their confirmation. This technique has been previously used in networking \cite{PipeliningInHTTP}  as well as in consensus \cite{Tuning-Paxos-for-High-Throughput-with-Batching-and-Pipelining, Hot-stuff} to improve performance.
Therefore, as optimization the primary can propose a new block $B'$ with sequence $s$  after $Pre-Proposal$ phase for block $B$ with sequence $s-1$ without waiting for the block $B$ to get committed. The primary can wait only for majority of votes from the impetus committee to propose a new block. This technique decreases the wait time for the primary to propose next block while making sure that with high probability the protocol will make progress.\looseness=-1

For safety condition to hold we need to replace $\beta$  with $A_{\beta}$, where  $A_{\beta}$ is an array of $\beta$ in $ViewChange$ message $V$. As in basic Hermes the primary proposes the next block after it commits its parent block. There will be only one valid pending block in the network with $\beta$. But in case of pipelined Hermes, Since the primary proposes a block after $Pre-Proposal$ phase of its parent, by the end of timeout a node might receive multiple uncommitted blocks. Therefore, the information about latest uncommitted blocks has to be included in $A_{\beta}$ during the view change. \looseness=-1 

For state machine replication (SMR) in blockchains sending $Pre-proposal$ message, $Confirm$ message and  committing a block during $Approval$ phase has to be done in order (serially) for each block. 
This means a node $i$ will not confirm  block $B$ at the sequence $s$ ($\eta_i(B,s) \gets true$) before  block $B'$ with the sequence $s'$ such that $s>s'$. Out of order messages can be cached.
But other operations  including signature verification and format checking can be done concurrently.

We leave more details on this to programmers who will be going to implement this protocol. \looseness=-1

\begin{figure*}
\hfill
    \begin{minipage}{0.45\linewidth}
       \includegraphics[width=6cm,height= 4cm]{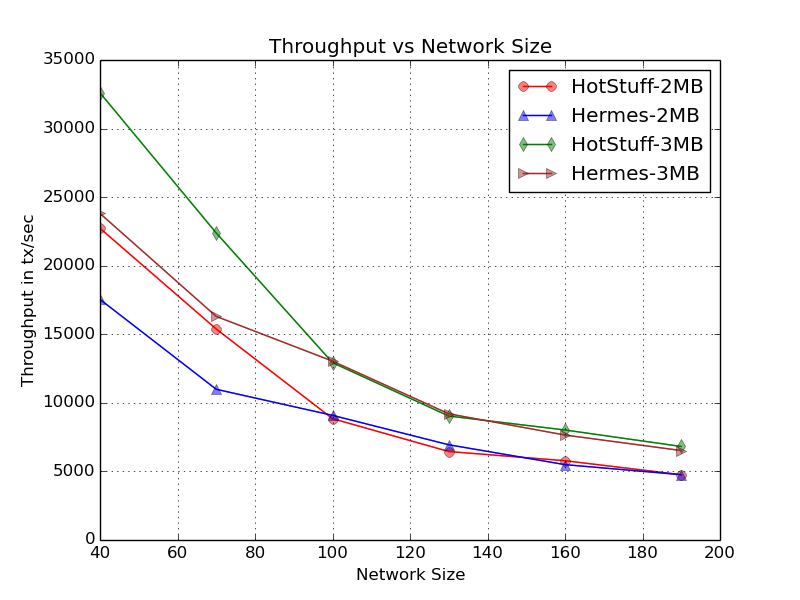}
        \caption{\protect\oalign{Throughput with different network sizes when \hfil\strut\cr\strut all nodes  have higher bandwidth \hfil}}
     \label{Evaluation:NormalThroughputTests}

           \end{minipage}
           \hfill
\begin{minipage}{0.45\linewidth}
    \includegraphics[width=6cm, height= 4cm]{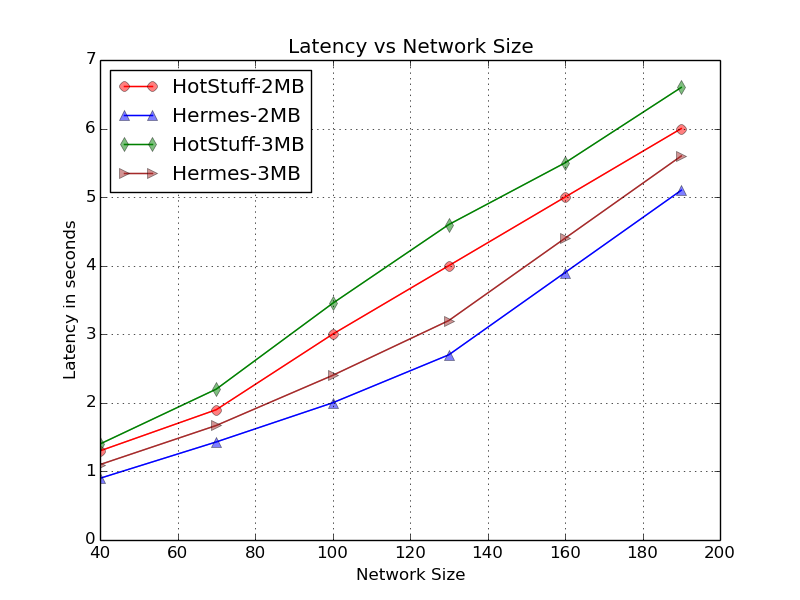}
   \caption{\protect\oalign{Latency with different network sizes when \strut\cr\strut all  nodes  have higher bandwidth \hfil}}
   \label{Evaluation:NormalLatencyTests}
    \end{minipage}


\end{figure*}

\begin{figure*}
\hfill
    \begin{minipage}{0.45\linewidth}

       \includegraphics[width=6cm, height= 4cm]{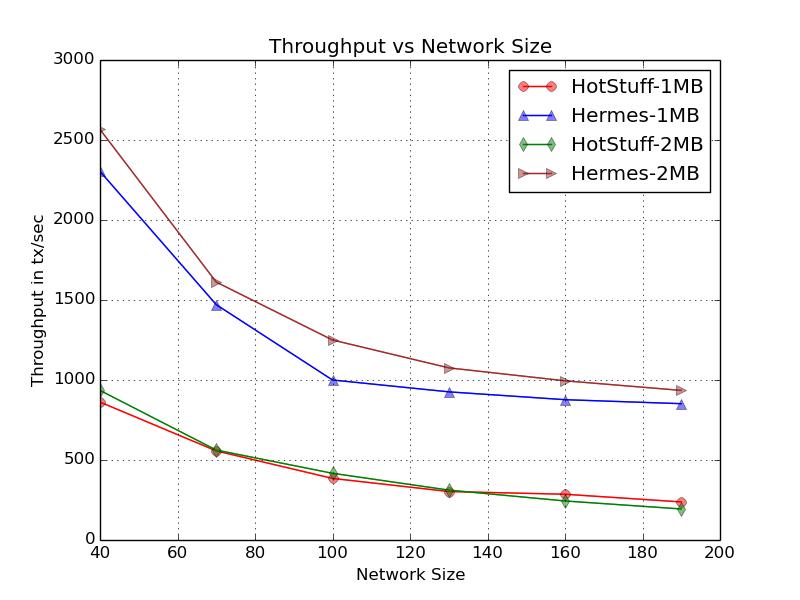} 
        \caption{Throughput when a slack primary is introduced}
            \label{Evaluation:LowThroughputTests}
    
               \end{minipage}
               \hfill
\begin{minipage}{0.45\linewidth}
     \includegraphics[width=6cm, height= 4cm]{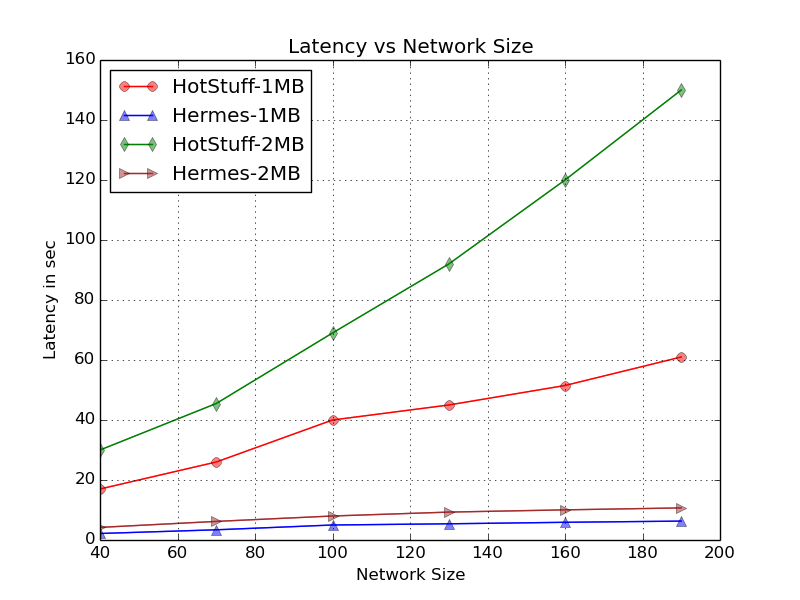} 
    \caption{Latency when a slack primary is introduced}
    \label{Evaluation:LowLatencyTests}
        \end{minipage}

\end{figure*}

\section{Evaluations}
\label{section:experiments}

We have implemented prototype of optimized Hermes as well as chained Hotstuff \cite{Hot-stuff} (pipelined) using Golang programming language.
We chose pipelined Hotstuff as it is a state-of-the-art consensus protocol. A variant of Hotstuff called Libra-BFT \cite{libra-BFT} is also used by Facebook. We implemented Hotstuff to keep similar code architecture with Hermes so that we can fairly compare both algorithms.

Experiments were conducted in the Amazon Web Services (AWS) cloud. Each node in the network was represented by an instance of type {\em t2.large} in AWS. Each {\em t2.large} instance has 2 virtual CPU (cores) and 8GB of memory. We recorded performance of Hermes as well as Hotsuff with  network sizes of $40$, $70$, $100$, $130$, $160$ and $190$ nodes. The impetus committee size $c$ for each network size $n$ are $18$, $27$, $30$, $33$, $34$ and $36$ respectively. We adjusted values of $n$ and $c$ so that we can obtain the maximum failure probability of $P_f\leq 3.8 \times 10^{-22}$ using Equation \ref{Eq:Failure Probability Equation}. 

Randomly generated transactions were used during experiments to transfer funds among different accounts. Transactions were included in a block (batch). We used different block sizes of 1MB, 2MB and 3MB. 

During our first evaluation all nodes had similar upload and download  bandwidth of $1$Gbps $128$MBps. During our tests as shown in Figure \ref{Evaluation:NormalThroughputTests} Hotstuff performs better with small network size ($40$ and $70$). This is due to the reason that when $n$ is small, $c$ ($c < n$) has to be larger to maintain a safe failure probability.  
Therefore, as the network size $n$ grows larger the growth of $c$ remains sub linear to the growth $n$ while keeping safe failure probability. As a result, with larger $n$ Hermes performs better and its performance is comparable to the performance of HotStuff. 
Whereas, in terms of latency as shown in Figure \ref{Evaluation:NormalLatencyTests}, Hermes  outperforms the Hotstuff. As it can be seen, Hermes has lower latency than the Hotstuff and variation in the latency of Hermes and Hotstuff with different block sizes of $2MB$ and $3MB$
grow larger when $n>70$.\looseness=-1

Next we introduced slack primaries and then compared the performance of Hermes and Hotstuff as shown in Figure \ref{Evaluation:LowThroughputTests} and Figure \ref{Evaluation:LowLatencyTests}. We configured the upload speed of slack primary to $82$Mbps (roughly equivalent to $10$MBps). We reduced block size to $1$ MB and $2$ MB 
as with block size of  $3$MB
we were experiencing very high latency and packet drops.
As it can be seen in Figure \ref{Evaluation:LowThroughputTests} the throughput lead of Hermes grows from more than two times (when $n=40$) to four times when block size is $1$MB and to $4.8$ times when block size is $2$MB (with $n=190$). Similarly, Figure \ref{Evaluation:LowLatencyTests} shows that the latency of Hotstuff grows very fast 
with increase in network size when primary is slack.

\section{Related Work}
\label{Section: Related Work}
The most relevant work to our protocol is Proteus  \cite{Jala1907:Proteus} in which the primary proposes  block to a subset of nodes called root committee.  But after confirmation from the root committee the primary proposes the block to all nodes which we have avoided in Hermes. Proteus has only focused on improving the message complexity.
Moreover, the probability of view change in Proteus is higher ($0.5$) compared to Hermes ($0.3$) as the primary is chosen from the impetus committee. 
Proteus is only R-Safe whereas Hermes is R-safe and highly likely S-safe. To be S-safe Hermes has additional recovery phase during its view change sub protocol. Proteus holds responsiveness but is not efficiently optimistic responsive whereas Hermes is efficiently optimistic responsive.\looseness=-1 

Random selection of subset of nodes in a BFT committee has also been used inAlgorand \cite{Algorand}, Consensus Through Herding \cite{ConsensusTH}, \cite{ SublinearRBFT} and \cite { Communication-Complexity}. Algorand is a highly scalable BFT based protocol, but can tolerate $20$ percent of Byzantine nodes in the network  while providing probabilistic safety  (depending  on  the  size  of  committee  and  number  of Byzantine nodes in the network).  Consensus Through Herding \cite{ConsensusTH} achieve consensus in poly-logarithmic number of rounds. In \cite{ SublinearRBFT,Communication-Complexity}  protocols achieves Byzantine Agreement (BA) with sublinear round complexity under corrupt majority. 
All of these protocols operate in Synchronous environment under adaptive adversarial model.  
On the contrary, Hermes can operate in asynchronous environment and does not consider adaptive adversarial model.

Hot-stuff \cite{Hot-stuff} is another BFT-based protocol linear message complexity $O(n)$. Hot-stuff has abolished the view change mode with the cost of an additional round of message complexity in normal mode.
In Hot-stuff there can be multiple forks before the block gets committed. When a block gets committed, the blocks in its competitive forks will be revoked. This effectively causes wastage of resources (bandwidth, processing power etc.) that were consumed in proposing the revoked blocks.
As the primary is changed with each block proposal, at most there can be $\lfloor n/3 \rfloor -1$ failed epochs (where no blocks are generated) out of $n$ epochs. The rotating primary mechanism of Hotstuff cannot resolve the problem of slack primary. Because if there are $n_s$ slack nodes. Then at least $n_s$ times there will be sudden degradation in protocol performance due to slack primary. This results in inconsistent performance.
Libra-BFT is \cite{libra-BFT} based on Hotstuff, but  its  synchronization module that is used to recover missing blocks has $O(n^2)$ message complexity.

There has been another line of work presented in PRIME \cite{Prime}, Aavardak\cite{Clement:2009:MBF:1558977.1558988} and RBFT \cite{Aublin:2013:RRB:2549695.2549742}. In these papers the authors have tried to address the performance attack (delaying proposal) by Byzantine primary by monitoring primary performance. In case of failing the expectation, primary is replaced by the network. But in their model, they have not considered the slack but honest nodes. That means slack primaries will also be treated as a Byzantine primary and will be replaced. Moreover, even a prompt primary might get replaced due to network glitch.  This will result in higher number of expensive but unnecessary view changes. Moreover, if Proof-of-Stake (PoS) is used over BFT, then an honest but slack primary will seldom get chance to propose a block, hence will not be able to increase its stake regularly causing centralization of stake to prompt nodes. Furthermore, unlike Hermes these solutions do not address combination of problems which include message complexity, latency, and slack primary problem. 
Hermes does not consider addressing the Byzantine primary performance attack but performance monitoring module from these proposal can be added to Hermes to address this issue.

\section{Conclusion}
\label{Conclusion}
In this paper we proposed a highly efficient BFT-based consensus protocol for blockchains which addresses the problem of slack primary. Furthermore, this protocol has lower latency and message complexity. We also show that breaking strong safety in Hermes is unlikely and expensive for Byzantine primary. Breaking strong safety will only cause additional processing delay and will not result in double spending attack. Therefore, a Byzantine primary has no incentive to perform such attack.  As a result the safety of Hermes is comparable to the safety of general BFT-based protocols.

\bibliographystyle{IEEEtran}
\bibliography{OPBFT,TBFT}

\end{document}